\documentclass[9pt,shortpaper,twoside,web]{IEEEtran}
\pdfminorversion=4
\usepackage{generic}
\usepackage{silence}
\WarningsOff

\usepackage[noadjust]{cite}
\usepackage{amsmath,amssymb,amsfonts}
\usepackage{amsthm}
\usepackage{algorithmic}
\usepackage{algorithm}
\usepackage{graphicx}
\usepackage{graphbox}
\usepackage{textcomp}
\usepackage{multirow}
\usepackage{balance}
\usepackage{xcolor}
\usepackage[export]{adjustbox}
\usepackage{color}
\usepackage{subcaption}
\usepackage{accents}
\usepackage{textcomp}
\usepackage{comment}
\usepackage{bm}
\usepackage{tikz}

\usepackage[shortlabels]{enumitem}
\def\BibTeX{{\rm B\kern-.05em{\sc i\kern-.025em b}\kern-.08em
    T\kern-.1667em\lower.7ex\hbox{E}\kern-.125emX}}

\newcommand{\blue}[1]{\color{black}{#1}\color{black}\phantom{}}
\setlength{\tabcolsep}{0.4em}

\definecolor{LightCyan}{rgb}{0.85,0.85,1.0}
\setcounter{MaxMatrixCols}{20}
\newcommand\myfontsize{\fontsize{9pt}{9pt}\selectfont}

\hyphenation{ope-ra-ting li-nea-ri-za-tion con-fi-gu-ra-tions em-pi-ri-ca-lly ex-pe-ri-ments e-llip-ti-cal bio-me-di-cal ca-te-go-ries pro-per-ties nu-me-ri-cal me-tho-do-lo-gy dis-con-ti-nuous accom-plish co-rres-pond ge-ne-ral ne-ce-ssa-ry pro-blem sca-la-ble co-rre-la-tions in-di-vi-dual ele-gan-ce}

\newtheorem{theorem}{Theorem}

\newtheorem{lemma}{Lemma}
\newtheorem{remark}{Remark}
\newtheorem{definition}{Definition}
\newtheorem{proposition}{Proposition}

\newtheorem{problem}{Problem}

\makeatletter
\let\NAT@parse\undefined
\makeatother
\usepackage[hidelinks]{hyperref}

\markboth{\journalname, VOL. XX, NO. XX, XXXX 2023}
{Sebasti\'{a}n \MakeLowercase{\textit{et al.}}: ECO-DKF: Event-Triggered and Certifiable Optimal Distributed Kalman Filter under Unknown Correlations}

\begin{document}

\title{ECO-DKF: Event-Triggered and Certifiable Optimal \\ Distributed Kalman Filter under Unknown Correlations}
\author{Eduardo Sebasti\'{a}n*,  Eduardo Montijano and Carlos Sagü\'{e}s
\thanks{This work has been supported by Spanish projects PID2021-125514NB-I00, PID2021-124137OBI00 and TED2021-130224B-I00 funded by MCIN/AEI/10.13039/501100011033, by ERDF A way of making Europe and by the European Union NextGenerationEU/PRTR, DGA T45-23R, and Spanish grant FPU19-05700.}
\thanks{E. Sebasti\'{a}n, E. Montijano and C. Sagü\'{e}s are associated with the Instituto de Investigaci\'on en Ingenier\'ia de Arag\'on, Universidad de Zaragoza, Spain 
\texttt{\small \{esebastian, emonti, csagues\}@unizar.es}. *Corresponding author.}
}

\newcommand\copyrighttext{%
  \footnotesize \textcopyright This paper has been accepted for publication at IEEE Transactions on Automatic Control. Please, when citing the paper, refer to the official manuscript published by the IEEE Transactions on Automatic Control.}
\newcommand\copyrightnotice{%
\begin{tikzpicture}[remember picture,overlay]
\node[anchor=south,yshift=10pt] at (current page.south) {\fbox{\parbox{\dimexpr\textwidth-\fboxsep-\fboxrule\relax}{\copyrighttext}}};
\end{tikzpicture}%
}

\maketitle
\copyrightnotice
\thispagestyle{empty}
\pagestyle{empty}

\begin{abstract}
This paper presents ECO-DKF, the first \textit{E}vent-Triggered and \textit{C}ertifiable \textit{O}ptimal \textit{D}istributed \textit{K}alman \textit{F}ilter. Our algorithm addresses two major issues  
inherent to
Distributed Kalman Filters: (i) fully distributed and scalable optimal estimation and (ii) reduction of the communication bandwidth usage. The first requires to solve an NP-hard optimisation problem, forcing relaxations that lose optimality guarantees over the original problem. Using only information from one-hop neighbours, we propose a tight Semi-Definite Programming relaxation that allows to certify locally and online if the relaxed solution is the optimum of the original NP-hard problem. In that case, ECO-DKF is optimal in the square error sense under scalability and event-triggered one-hop communications restrictions. Additionally, ECO-DKF is a globally asymptotically stable estimator. To address the second issue, we propose an event-triggered scheme from the relaxed optimisation output. The consequence is a broadcasting-based algorithm that saves communication bandwidth, avoids individual communication links and multiple information exchanges within instants, and preserves the optimality and stability properties of the filter.  
\end{abstract}

{\myfontsize \begin{IEEEkeywords}
Certifiability, distributed systems, event-triggered systems, Kalman Filter.
\end{IEEEkeywords}}


\section{Introduction}\label{sec:intro}
The Kalman Filter (KF)~\cite{Kalman1960Kalman} is a cornerstone in control theory. Its elegance and optimality has motivated its extension towards distributed setups.  
However, two major issues prevent a direct deployment in real setups. First, the conservation of the KF optimality. Second, the efficient usage of the communication bandwidth.

Regarding optimality, the works by Olfati-Saber~\cite{Olfati2005DKF,Olfati2007DKF} opened an era of consensus-based Distributed Kalman Filters (DKF). However, it is proved that the computation of the estimator's gains is not scalable~\cite{Olfati2009DKF, Zhu2013Consensus}. Since then, several works have aroused searching for an Optimal DKF (O-DKF)~\cite{Kamal2013Information, Battistelli2014Consensus, Das2016Consensus} by either keeping track of the whole network error covariance matrix or neglecting the information encoded in the covariance error matrices. In a recent work~\cite{Battilotti2021MultipleConsensus} it is discussed that, to reach optimal centralised KF performance, the state-of-the-art DKFs typically resort to multiple consensus steps among instants. The absence of solutions which optimally and \textit{scalably} integrate the neighbouring estimates and error covariances motivated the development of Diffusion KFs (DfKF). The most popular approach is the Covariance Intersection (CI) method~\cite{Julier1997CI,Julier2001CI,Hu2011Diffusion}. The CI method is generally suboptimal~\cite{Reinhardt2015CI}, thus arising a necessity of certification. The solution to approach optimality goes through convex optimisation problems~\cite{Cattivelli2010Diffusion}, sequential algorithms~\cite{Deng2012sequential} or diffusion steps that require global topology knowledge~\cite{Zhang2015Diffusion}.
Inspired by the DfKF solutions, ECO-DKF gathers the benefits of all the aforementioned solutions and adds the \textit{certifiability} property, which comes from a reformulation of the CI method based on the outer Löwner-John ellipsoid intersection method~\cite{Boyd1994LMI} and a Semi-Definite Programming (SDP) relaxation. By proving the equivalence between ECO-DKF and the optimal consensus over estimates we can demonstrate global asymptotic stability, independent on any hand-tuned gain. 

Regarding communications, the amount of information sent through the network is related to its energy consumption and the available bandwidth~\cite{Wu2012Event}, so an Event-Triggered (ET) estimator is desirable. The most popular ET rule is the Send-on-Delta (SoD)~\cite{Suh2007Modified}, which computes the difference between estimates \cite{Li2015Event,Liu2015Fading,Liu2021Event,Zhang2017Distributed}, measurements \cite{Suh2007Modified,Liu2021Event,Zhang2017Distributed} and/or innovations \cite{You2012Kalman,Liu2015Event,Battistelli2018Distributed} to trigger the information when the difference exceeds some tuned threshold. SoD is popular because it is easy to compute and agnostic to the employed estimator. In contrast, ECO-DKF exploits the estimator's structure to derive an ET rule which is even easier to compute and does not require tuning. A work conducted by Trimpe and Campi~\cite{Trimpe2015Choice} compared the most popular ET rules, concluding that leveraging the error covariance in the decision improves performance, e.g., the Kullback-Leibler (KL) divergence. Regarding KFs, ET Centralised Kalman Filters (CKF)~\cite{Battistelli2012Data,Wu2012Event,You2012Kalman,Han2013Stochastic,Trimpe2014Event} are subject to a critical arrival rate beyond which the error covariance becomes unbounded~\cite{Sinopoli2004Kalman}, underlining the importance of a carefully designed ET protocol. On the other hand, both the SoD~\cite{Battistelli2018Distributed,Zhang2017Distributed,Liu2021Event} and KL~\cite{Battistelli2018Distributed,Battistelli2019Event} protocols are very popular in non-optimal DKFs. 
In contrast, ECO-DKF is event-triggered-by-construction. This allows us to define an ET rule which is broadcasting-based, inexpensive to compute and avoids individual communication links and multiple information exchanges within instants. 

Exploiting these ideas, our main contribution is ECO-DKF, the \textit{first DKF with certifiability guarantees and event-triggered by construction}.  
If the certification is positive, then ECO-DKF is \textit{optimal in the square error sense} under unknown correlations and potential one-hop event-triggered communications. The proposed algorithm is also \textit{globally asymptotically stable} under mild periodic joint connectivity assumptions. Besides, nodes can \textit{check optimality locally and in real time}. In particular, ECO-DKF exploits the optimisation to derive a broadcasting-based rule, yielding to an algorithm which is \textit{event-triggered by construction}. ECO-DKF is \textit{fully distributed}, works with \textit{heterogeneous} sensor models and does not require \textit{parameter tuning}. 

This paper is an extension of~\cite{Sebastian2021CDC}. Compared to it, we first improve ECO-DKF from being time-triggered to event-triggered. This leads to new stability and optimality results that hold for mild connectivity settings. Besides, we relax the observability conditions to that of network-observability, a weaker assumption. Consistency of ECO-DKF is proved for the first time, while Lemmas that were presented in~\cite{Sebastian2021CDC} are now proved, including additional corollaries. The ET scheme is also a novel contribution, along with the corresponding convergence and optimality guarantees associated to it. Finally, the empirical evaluation of ECO-DKF is improved by adding more time-triggered state-of-the-art DKFs and including new experiments that validate the event-triggered properties of ECO-DKF.


\section{Problem formulation}\label{sec:preliminaries}
The target system is described by linear dynamics
\begin{equation}\label{eq:system}
    \mathbf{x}(k+1) = \mathbf{A}\mathbf{x}(k) + \mathbf{w}(k),
\end{equation}
where $k \in \mathbb{N}_{\ge 0}$ denotes the discrete time, $\mathbf{x} \in \mathbb{R}^{n}$ is the state of the system to estimate, $\mathbf{A} \in \mathbb{R}^{n \times n}$ is a matrix comprising the dynamics of the target system, and $\mathbf{w}(k) \sim \mathcal{N}(\mathbf{0},\mathbf{Q})$ is a white process. The system is tracked by a network of sensors described by a directed communication graph $G(k) = (V,E(k))$. The number of nodes is $N=|V|$, where $|\cdot|$ denotes cardinality. $\mathcal{N}_i(k) = \{j|(i,j)\in E(k) \}$ is the set of neighbours of node $i$ at instant $k$ and $\mathcal{J}_i(k) = \mathcal{N}_i(k) \cup i$. We use $\lfloor \cdot \rfloor_{ij}$ to denote the $ij$-th element or block component of a matrix. Sensors are described by a linear model
\begin{equation}\label{eq:distributedSensorModel}
    \mathbf{z}_i(k) = \mathbf{H}_i\mathbf{x}(k) + \mathbf{v}_i(k).
\end{equation}
In this expression, $\mathbf{z}_i \in \mathbb{R}^{m_i}$ is the measurement of node $i$, $\mathbf{H}_i \in \mathbb{R}^{m_i \times n}$ is the (unbiased) sensor model of node $i$, $m_i$ is the dimension of $\mathbf{z}_i$ and $\mathbf{v}_i(k) \sim \mathcal{N}(\mathbf{0},\mathbf{R}_i)$ is a white process. The measurements are independent between nodes. Notice that we are considering heterogeneous sensors.
Unless unclear, from now on we will omit the dependencies with $k$. Regarding observability, we assume \textit{network-observability}, defined as follows.
\begin{definition}\label{definition:observability}
System~\eqref{eq:system} is network-observable if the pair $(\mathbf{A},[\mathbf{H}_1^T,\mathbf{H}_2^T,\hdots,\mathbf{H}_N^T]^T)$ is observable.
\end{definition}

The objective of the network is to cooperatively estimate the state of system~\eqref{eq:system}. Each node has an estimate of the state of system~\eqref{eq:system}, $\hat{\mathbf{x}}_i$, with associated error covariance matrix $\hat{\mathbf{P}}_i := \text{E}[(\hat{\mathbf{x}}_i-\mathbf{x})(\hat{\mathbf{x}}_i-\mathbf{x})^T]$. Besides, similarly to other typical filter algorithms, the nodes use an auxiliary variable for the prediction stage, $\bar{\mathbf{x}}_i,$ and $\bar{\mathbf{P}}_i := \text{E}[(\bar{\mathbf{x}}_i-\mathbf{x})(\bar{\mathbf{x}}_i-\mathbf{x})^T] $. In the classical CKF, the estimate $\hat{\mathbf{x}}_i(k-1), \hat{\mathbf{P}}_i(k-1)$ is first used to obtain the prediction $\bar{\mathbf{x}}_i(k), \bar{\mathbf{P}}_i(k)$ by propagation through the target dynamics. Then, the measurement is used to correct the prediction, leading to the updated estimate $\hat{\mathbf{x}}_i(k), \hat{\mathbf{P}}_i(k)$. The CKF is not scalable in networked applications because a central computation unit is needed to aggregate the measurements from all the nodes. An alternative is to develop a DKF, where each node uses only local and neighbouring information. 

\blue{Given, $\bar{\mathbf{P}}_i(k)$, we define $\bar{\mathbf{S}}_i(k) = \bar{\mathbf{P}}_i^{-1}(k)$ as the predicted error covariance matrices in information form. In this paper we are interested in viewing these matrices as ellipsoids, defined as follows. 
\begin{definition}\label{definition:ellipsoid}
Given $\bar{\mathbf{S}}_i$ and assuming unbiased sensors, the ellipsoid $\varepsilon_i^i$ is $ \varepsilon_i^i := \{\,\mathbf{x} \,| \,\mathbf{x}^T \bar{\mathbf{S}}_i^{-1} \mathbf{x} \leq 1 \, \}$.
\end{definition}
}
In our proposal, this absence of global knowledge leads to an NP-hard optimisation problem, caused by the unknown correlations between the nodes' estimates,
forcing the use of convex relaxations to solve it. Despite enabling tractability, the solution of the relaxed problem is not guaranteed to be the optimum of the original problem. This motivates the need of certifiability on the optimisation, formally defined as follows:
\begin{definition}[\textbf{From Definition 19 in~\cite{Yang2020Teaser}}]\label{definition:certifiability}
Given an optimization problem $\mathbb{O}(\mathbb{D})$ that depends on input data $\mathbb{D}$, we say that an algorithm $\mathbb{A}$ is certifiable if, after solving $\mathbb{O}(\mathbb{D})$, $\mathbb{A}$ either provides a certificate for the optimality of its solution or declares failure otherwise.
\end{definition}

Besides, to avoid divergence of the estimation, the result of the optimisation must be consistent, defined as follows:
\begin{definition}\label{definition:consistency}
An approximation $\tilde{\mathbf{P}}$ of $\mathbf{P}$ and $\tilde{\mathbf{x}}$ of $\mathbf{x}$ is consistent if and only if $\tilde{\mathbf{P}} - \mathbf{P} \succeq \mathbf{0}$ and $\textnormal{E}[(\tilde{\mathbf{x}}-\mathbf{x})] = \mathbf{0}$,
where $\succeq$ means positive semi-definiteness.
\end{definition}

Many DKFs rely on time-triggered communications of all the nodes, requiring an intensive bandwidth usage. Instead, event-triggered communications save energy and bandwidth. In this regard, we define two concepts. The first is Periodic Joint Connectivity (PJC):
\begin{definition}\label{definition:PJC}
Consider a sequence of graphs $\{G(k), G(k+i),\hdots, G(k+t) \}$ with $t \in \mathbb{N}$. If $\mathbf{G}(k,t)=\cup_{i=0}^{t} G(k+i)$ is a strongly connected graph, then $\mathbf{G}(k,t)$ is a periodic joint connected graph with period $t$ at instant $k$.
\end{definition}
Second, we define a metric to measure the communication usage in the network. Before, we define the concept of ``triggering function'':

\begin{definition}\label{definition:triggering_function}
A triggering function $g_i(\cdot, k) \in \{0, 1\}$ is an indicator function that depends only on local information at node $i$ and such that, if $g_i(\cdot, k)=1$, then node $i$ broadcasts at instant $k$, and, if $g_i(\cdot, k)=0$, then node $i$ does not broadcast at instant $k$.
\end{definition}

\begin{definition}\label{definition:NoB}
The Number of Broadcasts ({\normalfont NoB}) at instant $k$ in network $G(k)$ is
\begin{equation}\label{eq:def_nob}
    \textnormal{NoB}(k) := \sum_{i=1}^{N} g_i(\cdot, k).
\end{equation}
\end{definition}
We now formulate the problem addressed in the paper.

\begin{problem}\label{problem:ECO-DKF}
Find an event-triggered stable algorithm that certificates locally and in real time if, at each $k$, each node $i$ minimises $\textnormal{E}[||\hat{\mathbf{x}}_i(k)-\mathbf{x}(k)||^2]$, under the following restrictions:
\begin{enumerate}
    \item \textbf{Locality}: At instant $k$ and $\forall i$, node $i$ only uses $\mathbf{A}$, $\mathbf{H}_i$, $\mathbf{Q}$, and $\mathbf{R}_i$ as parameters. From instant $k$ to $k+1$ and $\forall i$, node $i$ stores $\hat{\mathbf{x}}_i(k)$, $\hat{\mathbf{P}}_i(k)$, $\mathbf{A}$, $\mathbf{Q}$,  $\mathbf{H}_i$ and $\mathbf{R}_i$. 
    \item \textbf{One-hop communication}: At instant $k$ and $\forall i$, node $i$ communicates at most once with its neighbours $j \in \mathcal{N}_i(k)$. 
    \item \textbf{Consistency and connectivity}: The ET rule ensures consistency $\forall i, k$, and PJC of $G(k)$ $\forall k$ and some finite $t$.
    \item \textbf{Network-observability}.
\end{enumerate}
\end{problem}
\begin{remark}
From Problem~\ref{problem:ECO-DKF}, each node seeks to minimize, at each instant $k$, $\hbox{E}[|| \hat{\mathbf{{x}}}_i(k)-\mathbf{{x}}(k)||^2]$. Therefore, globally, the network minimises, at each instant $k$, $\sum_i \hbox{E}[|| \hat{\mathbf{{x}}}_i(k)-\mathbf{{x}}(k)||^2]$. Note that this definition holds because the minimisation refers to the current instant rather than a finite or infinite horizon of time.   
\end{remark}
We will use ${}^*$ to denote optimality in the sense of Problem~\ref{problem:ECO-DKF}. We assume that communication delays, quantisation effects and dropouts are negligible, a common assumption in many works (\cite{Shi2014Set,Liu2015Event,Battistelli2012Data,You2012Kalman} among others).


\section{ECO-DKF Algorithm}\label{sec:algorithm}
The proposed solution for Problem~\ref{problem:ECO-DKF} is the ECO-DKF algorithm. We first offer an overview of ECO-DKF and then, in Sections~\ref{sec:solution},~\ref{sec:stability} and~\ref{sec:event_triggered}, we formally study the main properties of the estimator. 

ECO-DKF algorithm is based on a novel optimization problem to aggregate predicted KF estimates. We propose to use the outer Löwner-John (LJ) method~\cite{John2014Ellipsoid},
which computes the smallest ellipsoid that surrounds the intersection of a set of ellipsoids,
\begin{subequations}\label{eq:outerLJellipse}
\begin{alignat}{2}
\bar{\mathbf{S}}_i^*, \bm{\lambda}^*_i =  \:\:\: &\underset{\bar{\mathbf{S}},\bm{\lambda}}{\arg\min}         \:\:\:\:\:\:\:\:\hbox{Tr}(\bar{\mathbf{S}}^{-1})  \label{eq:optProb3}
\\
   \:\:\:\:\:\: s.t.  \:\:\:\:\:&\mathbf{0}  \prec \bar{\mathbf{S}} \preceq \sum_{j=1}^{|\mathcal{J}_i|} \lambda_{j} \bar{\mathbf{S}}_j ,\label{eq:constraint31}
\\
&\sum_{j=1}^{|\mathcal{J}_i|} \lambda_{j}   \leq 1, \text{ }
\lambda_{j}  \geq 0 \:\:\forall j \in \mathcal{J}_i, \label{eq:constraint32}
\end{alignat}
\end{subequations}
where \blue{$\bar{\mathbf{S}}_j = \bar{\mathbf{P}}_j^{-1}$ characterise the ellipsoids}, Tr$(\cdot)$ is the trace of a matrix, $\lambda_{j}$ is the $j$-th element of vector $\bm{\lambda}$, and $\prec$ and $\preceq$ denote definiteness and semi-definiteness.
The selection of the trace as the optimisation cost function follows from minimising the square error $\textnormal{E}[||\hat{\mathbf{x}}_i-\mathbf{x}||^2]$ in Problem~\ref{problem:ECO-DKF}, proved later on Theorem~\ref{theorem:optimality}. The input of the outer LJ method is the predicted covariance, in information form, from node $i$ and its neighbours. 
Meanwhile, $\bm{\lambda}_i$ weights the importance of each prediction. The output of~\eqref{eq:outerLJellipse} is used to aggregate the predictions as follows
\begin{flalign}
    \bar{\mathbf{x}}_{i}^* = \bar{\mathbf{P}}_{i}^* \kern -0.2cm \sum_{j\in \mathcal{J}_i(k)}\kern -0.2cm \lambda_{ij}^*\bar{\mathbf{s}}_j
    \hbox{ and }
    \bar{\mathbf{P}}^*_{i} = (\bar{\mathbf{S}}_i^*)^{-1} \label{eq:KF_o1}.
\end{flalign}
The study of this aggregation is developed in Section~\ref{sec:solution}, including the certification guarantees on~\eqref{eq:outerLJellipse} which permits the formal results about optimality in Section~\ref{sec:stability}. To the best of our knowledge, this is the first time optimisation problem~\eqref{eq:outerLJellipse} is applied in a DKF. 

Interestingly, the output of~\eqref{eq:outerLJellipse} allows to develop a novel ET rule to decide if a node broadcasts a message to its neighbours:  
\begin{equation}\label{eq:ETrule1}
\kern -0.3cm\begin{aligned}
    &g_i(\bm{\lambda}_i^*(k-1),k) \equiv \\&g_i(\bm{\lambda}_i^*) = \left\{
\begin{aligned}
    1, & \hbox{ if } (\hbox{max}(\bm{\lambda}_i^*) = \lambda_{ii}^*) \vee (\mathcal{J}_i = \{i\} \hbox{ } \kern -0.1cm \wedge \kern -0.1cm \hbox{ } \text{Ber}(p) = 1 )
    \\ 
    0, & \hbox{ otherwise }
\end{aligned}
\right.
\end{aligned}
\end{equation}
where $\lambda_{ii}^* \in [0,1]$ is the weight of $\bm{\lambda}_i^*$ associated to node $i$, $\vee$ is the logical ``OR'', $\wedge$ is the logical ``AND'', and $\hbox{Ber}(p)$ represents the Bernoulli distribution of probability $p$. 
The messages are defined by $\hbox{msg} = \{ \mathbf{U}_i, \mathbf{u}_i, \bar{\mathbf{S}}_i, \bar{\mathbf{s}}_i  \}$, where
\begin{equation*}
    \begin{aligned}
    \mathbf{u}_i = \mathbf{H}_i^T\mathbf{R}_i^{-1}\mathbf{z}_i, \kern 0.1cm
    \mathbf{U}_i = \mathbf{H}_i^T\mathbf{R}_i^{-1}\mathbf{H}_i, \kern 0.1cm
    \bar{\mathbf{s}}_i = \bar{\mathbf{P}}_i^{-1}\bar{\mathbf{x}}_i, \kern 0.1cm \bar{\mathbf{S}}_i = \bar{\mathbf{P}}_i^{-1}.
    \end{aligned}
\end{equation*}
The quantities are in information form~\cite{anderson2012optimal} to comply with restriction 1 of Problem~\ref{problem:ECO-DKF}. 
We use the term ``broadcast'' because, when a node sends its information, there is no distinction to which neighbours the message arrives. A selective communication requires establishment of particular communication channels and a priori knowledge on the set of neighbours. Besides, we do not allow multiple communications within an instant.

Now, we are ready to present ECO-DKF in Algorithm~\ref{al:ECO-DKF}. Given $\hat{\mathbf{x}}_i^*(k-1)$, $\hat{\mathbf{P}}_i^*(k-1)$ and $\bm{\lambda}_i^*(k-1)$, each node first computes the prediction of the state through the target system dynamics, \begin{flalign}
    \bar{\mathbf{P}}_i(k) &= \mathbf{A}\hat{\mathbf{P}}_i^*(k-1)\mathbf{A}^T + \mathbf{Q}\label{eq:KF5},
    \\
    \bar{\mathbf{x}}_i(k) &= \mathbf{A}\hat{\mathbf{x}}_i^*(k-1)\label{eq:KF6}.
\end{flalign}
Then, each node obtains its measurement $\mathbf{z}_i(k)$ and decides whether to broadcast a message or not according to the ET rule~\eqref{eq:ETrule1}. Node $i$ receives messages from broadcasting neighbours. The information from the messages is aggregated to compute the values that will be used in the correction step. In the case of the measurements, this is direct because they are independent
\begin{flalign}
    \mathbf{y}_i = \sum_{j\in \mathcal{J}_i} \mathbf{u}_j \hbox{ and }
    \mathbf{Y}_i =\sum_{j\in \mathcal{J}_i} \mathbf{U}_j\label{eq:KF1}.
\end{flalign}
On the other hand, the aggregation of predictions is harder because we must consider the unknown correlations between them.
To solve this, we use~\eqref{eq:outerLJellipse} and~\eqref{eq:KF_o1}.

Finally, each node calculates the estimated error covariance matrix
\begin{equation}
    \hat{\mathbf{P}}_{i}^*= \left(\bar{\mathbf{S}}_i^* + \mathbf{Y}_i\right)^{-1}\label{eq:KF3}.
\end{equation}
This is analogous to the correction step of the information form~\cite{anderson2012optimal} DKF presented by Olfati-Saber (Eq. (10) in~\cite{Olfati2009DKF}), where the optimal predicted information covariance matrix, $\bar{\mathbf{S}}_i^*$, is fused with all the measurement information covariance matrices, $\mathbf{Y}_i$, because the measurements are uncorrelated and $\bar{\mathbf{S}}_i^*$ is consistent (proven in Proposition~\ref{proposition:consistency}). The estimated state is obtained as follows:
\begin{equation}\label{eq:CODKF_estimation}
    \hat{\mathbf{x}}_i^* = \bar{\mathbf{x}}_i^* + \hat{\mathbf{P}}_{i}^*(\mathbf{y}_i - \mathbf{Y}_i\bar{\mathbf{x}}_i^*).
\end{equation}
It is noteworthy to see that Eq.~\eqref{eq:CODKF_estimation} is similar to 
\begin{equation}
    \label{eq:KF4}
    \hat{\mathbf{x}}_i =
        \bar{\mathbf{x}}_i + \hat{\mathbf{P}}_{i}^*\left(\mathbf{y}_i - \mathbf{Y}_i\bar{\mathbf{x}}_i\right) +\gamma \hat{\mathbf{P}}_{i}^* \sum_{j\in \mathcal{N}_i} (\bar{\mathbf{x}}_j - \bar{\mathbf{x}}_i),
\end{equation}
which comes from the consensus-based DKF in~\cite{Olfati2007DKF}. However, in ECO-DKF the consensus is implicit in the optimisation, overcoming the dependence for stability on the parameter $\gamma$. 

We briefly discuss the communication and computational burden. Regarding communication, the size of each message is constant with the number of nodes, so the proposal is scalable. Besides, nodes do not need any global knowledge of the topology nor sensor models of neighbours. Regarding the computational cost, the bottleneck is~\eqref{eq:outerLJellipse}. However, the current hardware can solve large instances of this optimisation problem in real-time~\cite{Yang2020Teaser}.

\begin{algorithm}
\caption{ECO-DKF in node $i$}\label{al:ECO-DKF}
\begin{algorithmic}[1]
\STATE Initialisation: $\hat{\mathbf{P}}_i^*(0) = \mathbf{P}_0$, $\hat{\mathbf{x}}_i^*(0) = \mathbf{x}_0$, $\bm{\lambda}_i^*(0)=\bm{\lambda}_0$
\WHILE{True}
    \STATE Prediction:
            \begin{itemize}
               \item[] $\bar{\mathbf{P}}_i(k) = \mathbf{A}\hat{\mathbf{P}}_i^*(k-1)\mathbf{A}^T + \mathbf{Q}$
               \item[] \:$\bar{\mathbf{x}}_i(k) = \mathbf{A}\hat{\mathbf{x}}_i^*(k-1)$
           \end{itemize}
    \STATE Measurement: get $\mathbf{z}_i(k)$
    \STATE Broadcasting:
    \STATE \hspace{0.7cm}\textbf{if} $g_i(\bm{\lambda}^*_i(k-1), k) == 1$ \textbf{then}
    \STATE \hspace{1.4cm}Send $\{ \mathbf{U}_i(k), \mathbf{u}_i(k), \bar{\mathbf{S}}_i(k), \bar{\mathbf{s}}_i(k)  \}$
    \STATE \hspace{0.7cm}\textbf{end if}
    \STATE \hspace{0.7cm}Receive $\{ \mathbf{U}_j(k), \mathbf{u}_j(k), \bar{\mathbf{S}}_j(k), \bar{\mathbf{s}}_j(k)  \}$ $\forall j \in \mathcal{N}_i(k)$
    \STATE Aggregation of measurements' data:
           \begin{itemize}
               \item[] $\mathbf{Y}_i(k) = \sum_{j\in \mathcal{J}_i(k)} \mathbf{U}_j(k)$ 
               \item[] $\mathbf{y}_i(k) \:= \sum_{j\in \mathcal{J}_i(k)}  \mathbf{u}_j(k)$ 
           \end{itemize}
    \STATE Aggregation of predictions' data:
            \begin{itemize}
               \item[] \:$\bar{\mathbf{S}}_i^*(k)$,$\bm{\lambda}^*_i(k)$ $\leftarrow$ Solution of~\eqref{eq:outerLJellipse}
               \item[] \:\:\:\:\:\:$\bar{\mathbf{P}}_{i}^*(k) = (\bar{\mathbf{S}}_i^*(k))^{-1}$
               \item[] \:\:\:\:\:\:$\bar{\mathbf{x}}_{i}^*(k) \:= \bar{\mathbf{P}}^*_{i}(k) \sum_{j\in \mathcal{J}_i(k)} \lambda_{ij}^*(k)\bar{\mathbf{s}}_j(k)$
           \end{itemize}
    \STATE Correction:
            \begin{itemize}
               \item[] $\hat{\mathbf{P}}_i^*(k) = (\bar{\mathbf{S}}^*_i(k) + \mathbf{Y}_i(k))^{-1}$
               \item[] \:\:$\hat{\mathbf{x}}_i^*(k) = \bar{\mathbf{x}}_{i}^*(k) + \hat{\mathbf{P}}_i^*(k)(\mathbf{y}_i(k) - \mathbf{Y}_i(k)\bar{\mathbf{x}}_{i}^*(k))$
           \end{itemize}
\ENDWHILE 
\end{algorithmic}
\end{algorithm}


\section{Certifiable Covariance Bounding}\label{sec:solution}
To achieve the optimality pursued in Problem~\ref{problem:ECO-DKF} it is necessary to deal with the optimal aggregation of neighbouring predictions under unknown correlations. This fusion can be described as finding the minimum volume ellipsoid containing the intersection of the $p$ ellipsoids~\cite{Boyd1994LMI} formed by the matrices $\bar{\mathbf{P}}_j$ in the set $\mathcal{J}_i(k)$. 
Thus, we first define the concept of intersection of ellipsoids.
\begin{definition}\label{definition:intersection}
The intersection of $p$ ellipsoids, at node $i$, is the convex set 
$\mathcal{F}_i := \varepsilon_1^i \cap \hdots \cap \varepsilon_j^i \cap \hdots \cap \varepsilon_p^i$.
\end{definition}

From Definitions~\ref{definition:ellipsoid} and~\ref{definition:intersection}, volume minimisation can be transformed into a maximisation over the information matrices. In particular since the expected square error is equivalent to the trace of $\hat{\mathbf{P}}_i^*,$ the objective of ECO-DKF is to optimise this metric.
\begin{problem}[\textbf{Adapted from~\cite{Boyd1994LMI}}]\label{problem:intersection}
Find $(\varepsilon^i_i)^{*}$ such that it contains $\mathcal{F}_i$ and \emph{Tr}$(\bar{\mathbf{S}}_i^*)$ is maximised.
\end{problem}
Problem~\ref{problem:intersection} is solved by the following optimisation problem~\cite{Henrion2001LMI}: 
\begin{subequations}\label{eq:traceProblem}
\begin{alignat}{2}
\underset{\mathbf{X}}{\max} & \:\:\:\:\:\:\:\hbox{Tr}(\mathbf{X}) \label{eq:optProb4}
\\
s.t. & \:\:\:\:\:\:\: \mathbf{X}    \succeq \mathbf{0}, \hbox{ }   \hbox{Tr}(\mathbf{X} \bar{\mathbf{S}}_j)  \leq 1 \:\:\: \forall j \in \mathcal{J}_i(k),\label{eq:constraint41}
\\
&   \:\:\:\:\:\:\:   \hbox{rank}(\mathbf{X})  = 1. \label{eq:constraint43}
\end{alignat}
\end{subequations}
This problem is NP-hard due to the non-convex constraint~\eqref{eq:constraint43}. Thus, it is necessary to find a convex relaxation to make the optimisation tractable. 
The simplest solution is to drop the non-convex constraint,
\begin{subequations}\label{eq:traceRelax}
\begin{alignat}{2}
\underset{\mathbf{X}}{\max} & \:\:\:\:\:\:\: \hbox{Tr}(\mathbf{X}) \label{eq:optProb5}
\\
s.t. & \:\:\:\:\:\:\: \mathbf{X}    \succeq \mathbf{0}, \hbox{ }    \hbox{Tr}(\mathbf{X}\bar{\mathbf{S}}_j)  \leq 1 \:\:\: \forall j \in \mathcal{J}_i(k),\label{eq:constraint51}
\end{alignat}
\end{subequations}
However, it is not possible to use~\eqref{eq:traceRelax} in ECO-DKF because it does not provide an estimate nor an error covariance matrix, but $\mathbf{X}$. Instead, we propose to use the outer Löwner-John (LJ) method~\cite{John2014Ellipsoid}, which leads to the relaxed optimisation problem in~\eqref{eq:outerLJellipse}. 
The input of the outer LJ method is the $|\mathcal{J}_i(k)|$ ellipsoids described by matrices $\bar{\mathbf{S}}_j$ $\forall j \in \mathcal{J}_i(k)$. Then, it computes the smallest ellipsoid that includes $\mathcal{F}_i(k)$. The intersection always exist since the ellipsoids are all centered at $\mathbf{0}$~\cite{Henrion2001LMI}. From~\eqref{eq:constraint31}, $\bar{\mathbf{S}}_i^*$ is upper-bounded by the weighted sum of matrices $\bar{\mathbf{S}}_j$ $\forall j \in \mathcal{J}_i(k)$. Therefore, $\bar{\mathbf{S}}_i^*$ depends on the topology and the target system dynamics, for they determine the set of predicted error covariance matrices considered in the optimisation (Eq.~\eqref{eq:constraint31}) and the value of each prediction (Eqs.~\eqref{eq:KF5}-\eqref{eq:KF6}) respectively. 
Problem~\eqref{eq:outerLJellipse} outputs an estimate and error covariance matrix.
Nevertheless, problem~\eqref{eq:traceRelax} is important because it enables the certification of optimality. This is formally stated in the next Proposition, whose proof can be found in~\cite{Sebastian2021CDC}:
\begin{proposition}\label{proposition:certifiability}
Let $\mathbf{X}^*$ be the solution of~\eqref{eq:traceRelax}.
Define
\begin{equation}
    \mathbb{C}_i := \emph{rank}\left(\mathbf{X}^*\right) \hbox{ and } 
    \rho_i := \emph{Tr}(\mathbf{X}^*) \vartheta(\bar{\mathbf{S}}_i^*)\in [0,1],
\end{equation}
where $\vartheta(\bar{\mathbf{S}}_i^*)$ denotes the minimum eigenvalue of $\bar{\mathbf{S}}_i^*$, obtained by solving~\eqref{eq:outerLJellipse}.
If $\mathbb{C}_i=\rho_i=1$, then the solution of~\eqref{eq:outerLJellipse} is the optimum of the original non-relaxed problem~\eqref{eq:traceProblem}.
\end{proposition}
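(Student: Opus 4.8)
The plan is to exploit the chain of three programs---the NP-hard problem \eqref{eq:traceProblem}, its pure rank relaxation \eqref{eq:traceRelax}, and the outer LJ program \eqref{eq:outerLJellipse}---and to show that the two scalar certificates $\mathbb{C}_i$ and $\rho_i$ simultaneously close the two gaps introduced by relaxing. First I would record the elementary fact that \eqref{eq:traceRelax} is obtained from \eqref{eq:traceProblem} by dropping the non-convex constraint \eqref{eq:constraint43}, so its feasible set contains that of \eqref{eq:traceProblem} and $\mathrm{Tr}(\mathbf{X}^*)$ is an upper bound on the optimal value of \eqref{eq:traceProblem}. The standard rank-one certification then applies: if $\mathbb{C}_i=\mathrm{rank}(\mathbf{X}^*)=1$, write $\mathbf{X}^*=\mathbf{x}^*(\mathbf{x}^*)^T$; this $\mathbf{X}^*$ now satisfies \eqref{eq:constraint43}, hence is feasible for \eqref{eq:traceProblem}, and an optimizer of an upper-bounding relaxation that is itself feasible for the tighter problem must be optimal for it. Equivalently, $\mathbf{x}^*$ is the point of $\mathcal{F}_i$ (Definition~\ref{definition:intersection}) farthest from the origin, with $\mathrm{Tr}(\mathbf{X}^*)=\|\mathbf{x}^*\|^2=\max_{\mathbf{x}\in\mathcal{F}_i}\|\mathbf{x}\|^2$.

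The second step connects the outer LJ output to this certificate. Using feasibility of $(\bar{\mathbf{S}}_i^*,\bm{\lambda}_i^*)$ in \eqref{eq:constraint31}--\eqref{eq:constraint32}, the ordering $\bar{\mathbf{S}}_i^*\preceq\sum_{j\in\mathcal{J}_i}\lambda_{ij}^*\bar{\mathbf{S}}_j$ together with $\mathbf{X}^*\succeq\mathbf{0}$ and $\mathrm{Tr}(\mathbf{X}^*\bar{\mathbf{S}}_j)\le 1$ from \eqref{eq:constraint51} yields
\begin{equation*}
\mathrm{Tr}(\mathbf{X}^*\bar{\mathbf{S}}_i^*)\le\sum_{j\in\mathcal{J}_i}\lambda_{ij}^*\,\mathrm{Tr}(\mathbf{X}^*\bar{\mathbf{S}}_j)\le\sum_{j\in\mathcal{J}_i}\lambda_{ij}^*\le 1 .
\end{equation*}
On the other hand $\bar{\mathbf{S}}_i^*\succeq\vartheta(\bar{\mathbf{S}}_i^*)\mathbf{I}$ gives $\mathrm{Tr}(\mathbf{X}^*\bar{\mathbf{S}}_i^*)\ge\vartheta(\bar{\mathbf{S}}_i^*)\mathrm{Tr}(\mathbf{X}^*)$. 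Chaining the two, $\rho_i=\vartheta(\bar{\mathbf{S}}_i^*)\mathrm{Tr}(\mathbf{X}^*)\le 1$, and $\rho_i\ge 0$ is immediate, which also establishes the stated range $\rho_i\in[0,1]$.

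The third step is where $\rho_i=1$ does its work: equality forces both inequalities above to be tight, so $\vartheta(\bar{\mathbf{S}}_i^*)\mathrm{Tr}(\mathbf{X}^*)=\mathrm{Tr}(\mathbf{X}^*\bar{\mathbf{S}}_i^*)=1$. The first equality reads $\mathrm{Tr}\!\big(\mathbf{X}^*(\bar{\mathbf{S}}_i^*-\vartheta(\bar{\mathbf{S}}_i^*)\mathbf{I})\big)=0$ with both factors positive semi-definite, so $\mathbf{x}^*$ lies in the minimal-eigenvalue eigenspace of $\bar{\mathbf{S}}_i^*$; the second then becomes $(\mathbf{x}^*)^T\bar{\mathbf{S}}_i^*\mathbf{x}^*=\vartheta(\bar{\mathbf{S}}_i^*)\|\mathbf{x}^*\|^2=1$, i.e. $\mathbf{x}^*$ sits exactly on the boundary of the outer LJ ellipsoid along its longest semi-axis, whose squared length $1/\vartheta(\bar{\mathbf{S}}_i^*)$ equals $\|\mathbf{x}^*\|^2=\mathrm{Tr}(\mathbf{X}^*)$. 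Thus the outer LJ ellipsoid is tangent to $\mathcal{F}_i$ precisely at the certified farthest point $\mathbf{x}^*$ of the first step, which shows the containment relaxation in \eqref{eq:constraint31} is active with no slack and that $\bar{\mathbf{S}}_i^*$ attains the optimal value of \eqref{eq:traceProblem}, solving Problem~\ref{problem:intersection} exactly.

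I expect the main obstacle to be this third step: converting the two scalar equalities into the geometric statement that both relaxations are exact. The delicate point is that \eqref{eq:outerLJellipse} enforces containment only through the \emph{sufficient} S-procedure condition \eqref{eq:constraint31}, so one must argue that activeness of both the eigenvalue bound and the trace bound leaves no room for a strictly tighter covering ellipsoid---that the longest-axis contact at the certified point $\mathbf{x}^*$ is incompatible with any further shrinkage of $\bar{\mathbf{S}}_i^{*-1}$. By contrast, the rank-one argument and the inequality chain establishing $\rho_i\in[0,1]$ are routine.
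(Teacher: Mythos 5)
The paper does not actually contain its own proof of this proposition --- it defers to the conference version \cite{Sebastian2021CDC} --- so there is nothing inline to compare against; judged on its own terms, your argument is sound and implements exactly the mechanism the certificate pair is designed for. Step one (feasible set of \eqref{eq:traceProblem} is contained in that of \eqref{eq:traceRelax}, so a rank-one optimizer of the relaxation is feasible and hence optimal for the original problem, with $\mathrm{Tr}(\mathbf{X}^*)=\max_{\mathbf{x}\in\mathcal{F}_i}\|\mathbf{x}\|^2$) is the standard SDP rank-one certificate and is correct. Your sandwich $\vartheta(\bar{\mathbf{S}}_i^*)\mathrm{Tr}(\mathbf{X}^*)\le\mathrm{Tr}(\mathbf{X}^*\bar{\mathbf{S}}_i^*)\le\sum_j\lambda_{ij}^*\mathrm{Tr}(\mathbf{X}^*\bar{\mathbf{S}}_j)\le 1$ is a nice touch: it derives $\rho_i\in[0,1]$ purely from feasibility in \eqref{eq:constraint31}--\eqref{eq:constraint32} and \eqref{eq:constraint51}, rather than from the geometric containment argument, and the complementarity step ($\mathrm{Tr}(\mathbf{A}\mathbf{B})=0$ with $\mathbf{A},\mathbf{B}\succeq\mathbf{0}$ implies $\mathbf{A}\mathbf{B}=\mathbf{0}$, hence $\mathbf{x}^*$ lies in the $\vartheta$-eigenspace) correctly pins the certified extremal point of $\mathcal{F}_i$ to the tip of the longest semi-axis of the outer LJ ellipsoid. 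This establishes the literal claim of the proposition: under $\mathbb{C}_i=\rho_i=1$, the value $1/\vartheta(\bar{\mathbf{S}}_i^*)$ returned by \eqref{eq:outerLJellipse} coincides with the optimum of the non-relaxed problem \eqref{eq:traceProblem}.

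The one place you overstep is the closing claim that this ``solves Problem~\ref{problem:intersection} exactly.'' What the two certificates guarantee is that the longest semi-axis of $\bar{\mathbf{S}}_i^*$ cannot be shrunk (it touches $\mathcal{F}_i$ at $\mathbf{x}^*$); they do not by themselves exclude a covering ellipsoid with the same longest axis but smaller $\mathrm{Tr}(\bar{\mathbf{S}}^{-1})$ achieved by thinning the other axes, nor do they certify that the S-procedure condition \eqref{eq:constraint31} is lossless as a containment test. You correctly identify this as the delicate point in your final paragraph, but then assert the conclusion anyway. Since the proposition as stated only asserts optimality with respect to \eqref{eq:traceProblem}, your proof of the stated result stands; just delete or qualify the sentence claiming exactness for Problem~\ref{problem:intersection}.
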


The proposition gives a procedure where each node can check, locally and in real-time, the performance of its optimisation process and assess it by finding the optimal value for Problem~\ref{problem:intersection}. This is of key importance not only to verify the optimality of the estimation, but also the certificate can be used in other parts of the control pipeline to, e.g., assure optimality of the controller. In our case, we will leverage this process to design an ET communication rule.


\section{Stability and Optimality}\label{sec:stability}
In this section we analyse the stability and optimality of ECO-DKF. For clarity, we assume that all nodes broadcast their information to neighbours at all instants, such that the underlying communication graph is undirected and connected. In the next section we will relax the assumption.

Define $\eta_i = \hat{\mathbf{x}}_i-{\mathbf{x}}$,
\begin{equation*}
    \begin{aligned}
        \eta =& [\eta_1^T, \hdots, \eta_{N}^T]^T,  & \mathbf{H} =& \hbox{block-diag}(\mathbf{H}_1, \hdots, \mathbf{H}_{N})
        \\
        \mathbf{z} =& [\mathbf{z}_1^T, \hdots, \mathbf{z}_N^T]^T, &\mathbf{R} =& \hbox{block-diag}(\mathbf{R}_1, \hdots, \mathbf{R}_{N}),
        \\
        \mathbf{y} =& [\mathbf{y}_1^T, \hdots, \mathbf{y}_N^T]^T,&\mathbf{U} =& \hbox{block-diag}(\mathbf{U}_1, \hdots, \mathbf{U}_{N}),
        \\
        \mathbf{\mathcal{A}} =& \mathbf{I} \otimes \mathbf{A}, &\hat{\mathbf{P}}^* =& \hbox{block-diag}(\hat{\mathbf{P}}^*_1, \hdots, \hat{\mathbf{P}}^*_{N}),
        \\
        \mathbf{\mathcal{Q}} =& \mathbf{I} \otimes \mathbf{Q},&\bar{\mathbf{P}}^* =& \hbox{block-diag}(\bar{\mathbf{P}}_{1}^*, \hdots, \bar{\mathbf{P}}_{N}^*),
    \end{aligned}
\end{equation*}
and
\begin{equation}\label{eq:nw}
\begin{aligned}
    \mathbf{Y} \overset{Eq.~\eqref{eq:KF1}}{=} \mathbf{N}_w \mathbf{U} \hbox{ , where } & 
    \left. \begin{array}{ll}
         \lfloor \mathbf{N}_w \rfloor_{ij} = \mathbf{I}  \hbox{ if } j\in \mathcal{J}_i  \\
         \lfloor \mathbf{N}_w \rfloor_{ij} = 0*\mathbf{I} \hbox{ otherwise}
    \end{array}\right..
\end{aligned}
\end{equation}
To demonstrate the global asymptotic stability of ECO-DKF, we adapt and prove two Lemmas from~\cite{Olfati2009DKF}, \blue{extending them from the discrete-time centralized Kalman filter setting to our ECO-DKF setting}. The first builds some useful matrices. 
\begin{lemma}[\textbf{Adapted from Lemma 2 in~\cite{Olfati2009DKF}}]\label{lemma:matrices}
Given Eqs.~\eqref{eq:KF3}-\eqref{eq:CODKF_estimation}, the following holds:
\begin{enumerate}
    \item $\mathbf{F} = \mathbf{I} - \hat{\mathbf{P}}^*\mathbf{Y}=\hat{\mathbf{P}}^*(\bar{\mathbf{P}}^*)^{-1}$.
    \item $\hat{\mathbf{P}}^* = \mathbf{F}\mathbf{G}\mathbf{F}^T$ with $\mathbf{G} = \mathbf{\mathcal{A}}\hat{\mathbf{P}}^*\mathbf{\mathcal{A}}^T + \mathbf{\mathcal{Q}} + \mathbf{T}\mathbf{R}^{-1}\mathbf{T}^T$ and $\mathbf{T} = \bar{\mathbf{P}}^* \mathbf{N}_w \mathbf{H}^T$.
\end{enumerate}
\end{lemma}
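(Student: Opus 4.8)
The plan is to treat both parts as purely algebraic consequences of the information-form recursions~\eqref{eq:KF3}--\eqref{eq:CODKF_estimation}, stacked over all nodes, mirroring the centralized Kalman-filter computation of~\cite{Olfati2009DKF} while threading the neighbour-aggregation matrix $\mathbf{N}_w$ through every step.

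For the first part, I would start from the stacked correction~\eqref{eq:KF3}, which in block-diagonal form reads $(\hat{\mathbf{P}}^*)^{-1} = (\bar{\mathbf{P}}^*)^{-1} + \mathbf{Y}$. Isolating the measurement information as $\mathbf{Y} = (\hat{\mathbf{P}}^*)^{-1} - (\bar{\mathbf{P}}^*)^{-1}$ and left-multiplying by $\hat{\mathbf{P}}^*$ gives $\hat{\mathbf{P}}^*\mathbf{Y} = \mathbf{I} - \hat{\mathbf{P}}^*(\bar{\mathbf{P}}^*)^{-1}$, so that $\mathbf{I} - \hat{\mathbf{P}}^*\mathbf{Y} = \hat{\mathbf{P}}^*(\bar{\mathbf{P}}^*)^{-1}$, which are exactly the two expressions claimed for $\mathbf{F}$. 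This step is short and relies only on invertibility of $\bar{\mathbf{P}}^*$ and $\hat{\mathbf{P}}^*$, guaranteed by the strict definiteness $\mathbf{0}\prec\bar{\mathbf{S}}$ in~\eqref{eq:outerLJellipse} and the consistency shown in Proposition~\ref{proposition:consistency}.

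For the second part, I would substitute $\mathbf{F}=\hat{\mathbf{P}}^*(\bar{\mathbf{P}}^*)^{-1}$ (and, by symmetry of the covariances, $\mathbf{F}^T=(\bar{\mathbf{P}}^*)^{-1}\hat{\mathbf{P}}^*$) into $\mathbf{F}\mathbf{G}\mathbf{F}^T$. Since $\mathbf{F}^{-1}=\bar{\mathbf{P}}^*(\hat{\mathbf{P}}^*)^{-1}$, the target $\hat{\mathbf{P}}^*=\mathbf{F}\mathbf{G}\mathbf{F}^T$ is equivalent to $\mathbf{G}=\bar{\mathbf{P}}^*(\hat{\mathbf{P}}^*)^{-1}\bar{\mathbf{P}}^*$, and invoking part~1 this becomes $\mathbf{G}=\bar{\mathbf{P}}^*+\bar{\mathbf{P}}^*\mathbf{Y}\bar{\mathbf{P}}^*$, the distributed analogue of the Joseph-form covariance update. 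It then remains to match this against $\mathbf{G}=\mathbf{\mathcal{A}}\hat{\mathbf{P}}^*\mathbf{\mathcal{A}}^T+\mathbf{\mathcal{Q}}+\mathbf{T}\mathbf{R}^{-1}\mathbf{T}^T$ term by term: the prediction~\eqref{eq:KF5} identifies $\mathbf{\mathcal{A}}\hat{\mathbf{P}}^*\mathbf{\mathcal{A}}^T+\mathbf{\mathcal{Q}}$ with the predicted covariance $\bar{\mathbf{P}}^*$, and the remaining term must reproduce $\bar{\mathbf{P}}^*\mathbf{Y}\bar{\mathbf{P}}^*$. For the latter I would expand $\mathbf{T}\mathbf{R}^{-1}\mathbf{T}^T=\bar{\mathbf{P}}^*\mathbf{N}_w\mathbf{H}^T\mathbf{R}^{-1}\mathbf{H}\mathbf{N}_w^T\bar{\mathbf{P}}^*$ and use $\mathbf{H}^T\mathbf{R}^{-1}\mathbf{H}=\mathbf{U}$ (block-diagonal) together with the identity $\mathbf{Y}=\mathbf{N}_w\mathbf{U}$ from~\eqref{eq:nw}, so that the information injected at node $i$ is recovered as $\sum_{j\in\mathcal{J}_i}\mathbf{U}_j$, matching~\eqref{eq:KF1}.

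The main obstacle is precisely this last bookkeeping. In the centralized derivation of~\cite{Olfati2009DKF} one simply has $\mathbf{T}=\bar{\mathbf{P}}\mathbf{H}^T$, whereas here the one-hop aggregation is mediated by $\mathbf{N}_w$, so I must verify that $\bar{\mathbf{P}}^*\mathbf{N}_w\mathbf{U}\mathbf{N}_w^T\bar{\mathbf{P}}^*$ reproduces $\bar{\mathbf{P}}^*\mathbf{Y}\bar{\mathbf{P}}^*$ under the block-diagonal structure of $\bar{\mathbf{P}}^*$, $\mathbf{U}$ and $\hat{\mathbf{P}}^*$, and that the predicted covariance entering $\mathbf{F}$ is consistently the fused quantity $\bar{\mathbf{P}}^*$ rather than the raw propagation. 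A secondary subtlety is keeping the time indices aligned: the $\hat{\mathbf{P}}^*$ feeding $\mathbf{\mathcal{A}}\hat{\mathbf{P}}^*\mathbf{\mathcal{A}}^T+\mathbf{\mathcal{Q}}$ is the corrected estimate at $k-1$, whereas the $\hat{\mathbf{P}}^*$ on the left of $\hat{\mathbf{P}}^*=\mathbf{F}\mathbf{G}\mathbf{F}^T$ is at $k$. Once these are aligned, parts 1 and 2 together furnish the closed covariance recursion that the subsequent stability argument requires.
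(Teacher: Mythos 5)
Your Part~1 is the same argument as the paper's: both start from the stacked form of~\eqref{eq:KF3}, $(\hat{\mathbf{P}}^*)^{-1}=(\bar{\mathbf{P}}^*)^{-1}+\mathbf{Y}$, and multiply through by $\hat{\mathbf{P}}^*$. For Part~2, however, you take a genuinely different route. The paper works \emph{forwards}: it invokes the Joseph-form covariance update $(\hat{\mathbf{P}}^*)^{+}=\mathbf{F}\bar{\mathbf{P}}^{+}\mathbf{F}^T+\mathbf{K}\mathbf{R}\mathbf{K}^T$ as a known fact of the standard KF, substitutes the distributed gain $\mathbf{K}=\hat{\mathbf{P}}^*\mathbf{N}_w\mathbf{H}^T\mathbf{R}^{-1}$, and then uses Part~1 to pull $\mathbf{F}$ out of the term $\hat{\mathbf{P}}^*\mathbf{N}_w\mathbf{U}\mathbf{N}_w^T\hat{\mathbf{P}}^{*T}$, which is how $\mathbf{T}=\bar{\mathbf{P}}^*\mathbf{N}_w\mathbf{H}^T$ (with the $\mathbf{N}_w(\cdot)\mathbf{N}_w^T$ sandwich) appears automatically. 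You work \emph{backwards}: you reduce $\hat{\mathbf{P}}^*=\mathbf{F}\mathbf{G}\mathbf{F}^T$ to $\mathbf{G}=\bar{\mathbf{P}}^*(\hat{\mathbf{P}}^*)^{-1}\bar{\mathbf{P}}^*=\bar{\mathbf{P}}^*+\bar{\mathbf{P}}^*\mathbf{Y}\bar{\mathbf{P}}^*$ using only~\eqref{eq:KF3} and Part~1. Your reduction is cleaner in that it derives the required form of $\mathbf{G}$ from the algorithm's actual equations rather than importing the Joseph form, and it isolates exactly where the distributed bookkeeping enters.

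The gap is that you stop at that point. The entire content of Part~2, on your route, is the identity $\mathbf{T}\mathbf{R}^{-1}\mathbf{T}^T=\bar{\mathbf{P}}^*\mathbf{Y}\bar{\mathbf{P}}^*$, i.e.\ $\mathbf{N}_w\mathbf{U}\mathbf{N}_w^T=\mathbf{N}_w\mathbf{U}$, which you label ``I must verify'' and never carry out. As a full-matrix statement it is \emph{false}: $\lfloor\mathbf{N}_w\mathbf{U}\mathbf{N}_w^T\rfloor_{ij}=\sum_{l\in\mathcal{J}_i\cap\mathcal{J}_j}\mathbf{U}_l$ has nonzero off-diagonal blocks, whereas $\mathbf{Y}$ (as used node-wise in~\eqref{eq:KF3}) is block-diagonal with $\lfloor\mathbf{Y}\rfloor_{ii}=\sum_{j\in\mathcal{J}_i}\mathbf{U}_j$. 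What saves the lemma is that $\lfloor\mathbf{N}_w\mathbf{U}\mathbf{N}_w^T\rfloor_{ii}=\sum_{j\in\mathcal{J}_i}\mathbf{U}_j=\mathbf{Y}_i$ exactly, and since $\bar{\mathbf{P}}^*$ and $\hat{\mathbf{P}}^*$ are block-diagonal, only the diagonal blocks of $\mathbf{T}\mathbf{R}^{-1}\mathbf{T}^T$ enter the node-wise recursion. You need to say this explicitly — restrict the identity to diagonal blocks, or argue node by node — otherwise the step ``the remaining term must reproduce $\bar{\mathbf{P}}^*\mathbf{Y}\bar{\mathbf{P}}^*$'' fails as written. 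The same remark applies to your (correctly flagged but unresolved) identification of $\mathbf{\mathcal{A}}\hat{\mathbf{P}}^*\mathbf{\mathcal{A}}^T+\mathbf{\mathcal{Q}}$ with the fused $\bar{\mathbf{P}}^*$ appearing in $\mathbf{F}$: the paper makes the same identification silently, but a complete proof must state under which reading of $\bar{\mathbf{P}}^*$ (raw propagation versus LJ-fused output) both parts of the lemma are simultaneously consistent.
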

\begin{proof}
Let $\mathbf{F} = \mathbf{I} - \hat{\mathbf{P}}^*\mathbf{Y}$, by Eq.~\eqref{eq:KF3} we can write $\hat{\mathbf{P}}^* = ((\bar{\mathbf{P}}^*)^{-1} + \mathbf{Y})^{-1}$. Then,
\begin{equation}
   \hat{\mathbf{P}}^*((\bar{\mathbf{P}}^*)^{-1} + \mathbf{Y}) = \mathbf{I} \Rightarrow \mathbf{I} - \hat{\mathbf{P}}^*\mathbf{Y}=\hat{\mathbf{P}}^*(\bar{\mathbf{P}}^*)^{-1}
\end{equation}
and the first statement is proved. Regarding the second statement, notice that Eq.~\eqref{eq:CODKF_estimation} can be stacked as follows
\begin{equation}\label{eq:CODKF_estimation_stacked}
    \hat{\mathbf{x}}^* = \bar{\mathbf{x}}^* + \hat{\mathbf{P}}^*(\mathbf{y} - \mathbf{Y}\bar{\mathbf{x}}^*).
\end{equation}
Then, taking into account that $\mathbf{y}=\mathbf{N}_w\mathbf{H}^T\mathbf{R}^{-1}\mathbf{z}$ and $\mathbf{Y}=\mathbf{N}_w\mathbf{U} = \mathbf{N}_w\mathbf{H}^T\mathbf{R}^{-1}\mathbf{H}$, Eq.~\eqref{eq:CODKF_estimation_stacked} is rewritten from its information form as
\begin{equation}\label{eq:CODKF_estimation_standard}
    \hat{\mathbf{x}}^* = \bar{\mathbf{x}}^* + \hat{\mathbf{P}}^*\mathbf{N}_w\mathbf{H}^T\mathbf{R}^{-1}(\mathbf{z} - \mathbf{H}\bar{\mathbf{x}}^*) = \bar{\mathbf{x}}^* + \mathbf{K}(\mathbf{z} - \mathbf{H}\bar{\mathbf{x}}^*) 
\end{equation}
with $\mathbf{K}=\hat{\mathbf{P}}^*\mathbf{N}_w\mathbf{H}^T\mathbf{R}^{-1}$ the Kalman Gain of the standard KF formulation. Let now consider the update of $\hat{\mathbf{P}}^*$, where by update, denoted as $(\hat{\mathbf{P}}^*)^{+}$, we refer to the operations that compute $\hat{\mathbf{P}}^*(k)$ from $\hat{\mathbf{P}}^*(k-1)$ and $\bar{\mathbf{P}}^*(k-1)$
\begin{equation}\label{eq:standard}
    (\hat{\mathbf{P}}^*)^{+} = \mathbf{F}\bar{\mathbf{P}}^+\mathbf{F}^T + \mathbf{K}\mathbf{R}\mathbf{K}^T,  
\end{equation}
which comes from the standard KF~\cite{Olfati2009DKF}. By substituting $\mathbf{K}$ we obtain
\begin{equation}\label{eq:standard2}
    (\hat{\mathbf{P}}^*)^{+} = \mathbf{F}(\mathbf{\mathcal{A}}\hat{\mathbf{P}}^*\mathbf{\mathcal{A}}^T + \mathbf{\mathcal{Q}})\mathbf{F}^T \kern -0.2cm + \hat{\mathbf{P}}^*\mathbf{N}_w\mathbf{H}^T\mathbf{R}^{-1}\mathbf{H}\mathbf{N}_w^T(\hat{\mathbf{P}}^*)^T.   
\end{equation}
Given statement 1 of the Lemma,
\begin{equation}\label{eq:standard3}
    (\hat{\mathbf{P}}^*)^{+} \kern -0.2cm = \mathbf{F}(\mathbf{\mathcal{A}}\hat{\mathbf{P}}^*\mathbf{\mathcal{A}}^T + \mathbf{\mathcal{Q}} + \bar{\mathbf{P}}^*\mathbf{N}_w\mathbf{H}^T\mathbf{R}^{-1}\mathbf{H}\mathbf{N}_w^T(\bar{\mathbf{P}}^*)^T)\mathbf{F}^T,   
\end{equation}
and statement 2 is proved.
\end{proof}
The next Lemma proves global asymptotic stability for general dynamics, used later to prove ECO-DKF global asymptotic stability.
\begin{lemma}[\textbf{Adapted from Lemma 3 in~\cite{Olfati2009DKF}}]\label{lemma:stability}
Suppose that the error dynamics without noise are $\eta^+ = \mathbf{F}\mathbf{\mathcal{A}}\eta$, with $\mathbf{F}$ as in Lemma~\ref{lemma:matrices}. Then, the error dynamics is globally asymptotically stable system with a Lyapunov function $V(\eta) = \eta^T (\hat{\mathbf{P}}^*)^{-1} \eta$, provided that $g_i(\bm{\lambda}^*_i)=1$ for all $i,k$ and $G(k)$ is undirected and connected for all $k$. 
\end{lemma}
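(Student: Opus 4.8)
The plan is to verify that $V(\eta)=\eta^{T}(\hat{\mathbf{P}}^{*})^{-1}\eta$ satisfies the two standard requirements of a discrete-time Lyapunov function: uniform quadratic bounds $c_{1}\|\eta\|^{2}\le V(\eta)\le c_{2}\|\eta\|^{2}$ and strict decrease along the noise-free error dynamics $\eta^{+}=\mathbf{F}\mathbf{\mathcal{A}}\eta$. Positive definiteness of $V$ holds because each $\hat{\mathbf{P}}_{i}^{*}=(\bar{\mathbf{S}}_{i}^{*}+\mathbf{Y}_{i})^{-1}\succ\mathbf{0}$ (the constraint $\mathbf{0}\prec\bar{\mathbf{S}}$ in~\eqref{eq:outerLJellipse} forces $\bar{\mathbf{S}}_{i}^{*}\succ\mathbf{0}$ while $\mathbf{Y}_{i}\succeq\mathbf{0}$), so the block-diagonal $\hat{\mathbf{P}}^{*}\succ\mathbf{0}$; the uniform bounds $c_{1},c_{2}$ follow from the hypotheses $g_{i}(\bm{\lambda}_{i}^{*})=1$ for all $i,k$, connectivity of $G(k)$, and network-observability, which keep $\hat{\mathbf{P}}^{*}(k)$ bounded away from $\mathbf{0}$ and from $\infty$ uniformly in $k$. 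Global asymptotic stability then follows from the discrete-time Lyapunov theorem once strict decrease is established.

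The central computation is to evaluate $V$ at the successor state. Substituting $\eta^{+}=\mathbf{F}\mathbf{\mathcal{A}}\eta$ and statement~2 of Lemma~\ref{lemma:matrices}, $(\hat{\mathbf{P}}^{*})^{+}=\mathbf{F}\mathbf{G}\mathbf{F}^{T}$, into $V(\eta^{+})=(\eta^{+})^{T}((\hat{\mathbf{P}}^{*})^{+})^{-1}\eta^{+}$ yields
\[
V(\eta^{+})=\eta^{T}\mathbf{\mathcal{A}}^{T}\mathbf{F}^{T}\big(\mathbf{F}\mathbf{G}\mathbf{F}^{T}\big)^{-1}\mathbf{F}\mathbf{\mathcal{A}}\eta=\eta^{T}\mathbf{\mathcal{A}}^{T}\mathbf{G}^{-1}\mathbf{\mathcal{A}}\eta,
\]
where the $\mathbf{F}$ factors cancel since $\mathbf{F}=\hat{\mathbf{P}}^{*}(\bar{\mathbf{P}}^{*})^{-1}$ is a product of invertible matrices. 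Hence the increment is $\Delta V=\eta^{T}\big(\mathbf{\mathcal{A}}^{T}\mathbf{G}^{-1}\mathbf{\mathcal{A}}-(\hat{\mathbf{P}}^{*})^{-1}\big)\eta$, and everything reduces to proving that the bracketed matrix is negative definite, i.e. $\mathbf{\mathcal{A}}^{T}\mathbf{G}^{-1}\mathbf{\mathcal{A}}\prec(\hat{\mathbf{P}}^{*})^{-1}$.

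For this last step I would write $\mathbf{G}=\mathbf{\mathcal{A}}\hat{\mathbf{P}}^{*}\mathbf{\mathcal{A}}^{T}+\mathbf{\Sigma}$ with $\mathbf{\Sigma}:=\mathbf{\mathcal{Q}}+\mathbf{T}\mathbf{R}^{-1}\mathbf{T}^{T}$, and observe that $\mathbf{\Sigma}\succ\mathbf{0}$ because $\mathbf{\mathcal{Q}}=\mathbf{I}\otimes\mathbf{Q}\succ\mathbf{0}$. Congruence-transforming the target inequality by $(\hat{\mathbf{P}}^{*})^{1/2}$ and setting $\mathbf{B}=\mathbf{\mathcal{A}}(\hat{\mathbf{P}}^{*})^{1/2}$ reduces it to $\mathbf{B}^{T}(\mathbf{B}\mathbf{B}^{T}+\mathbf{\Sigma})^{-1}\mathbf{B}\prec\mathbf{I}$; the matrix inversion lemma then gives $\mathbf{B}^{T}(\mathbf{B}\mathbf{B}^{T}+\mathbf{\Sigma})^{-1}\mathbf{B}=\mathbf{I}-(\mathbf{I}+\mathbf{B}^{T}\mathbf{\Sigma}^{-1}\mathbf{B})^{-1}\prec\mathbf{I}$, since $\mathbf{I}+\mathbf{B}^{T}\mathbf{\Sigma}^{-1}\mathbf{B}\succ\mathbf{0}$ has positive-definite inverse. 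Thus $\Delta V<0$ for all $\eta\neq\mathbf{0}$.

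The main obstacle is carrying out this inequality without assuming $\mathbf{A}$, and hence $\mathbf{\mathcal{A}}$, invertible: the tempting shortcut of cancelling $\mathbf{\mathcal{A}}^{T}(\mathbf{\mathcal{A}}\hat{\mathbf{P}}^{*}\mathbf{\mathcal{A}}^{T})^{-1}\mathbf{\mathcal{A}}=(\hat{\mathbf{P}}^{*})^{-1}$ is invalid for singular $\mathbf{A}$, and it is precisely the matrix inversion lemma, which requires only $\mathbf{\Sigma}\succ\mathbf{0}$ (positive-definite process noise), that rescues the argument for general dynamics. A secondary point to handle with care is that the quadratic bounds on $V$ must be uniform in $k$: this is where the persistent-broadcast and connectivity hypotheses, together with network-observability, enter to upgrade mere attractivity into genuine global asymptotic stability.
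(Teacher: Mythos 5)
Your proposal is correct and follows essentially the same route as the paper: form $\delta V = V(\eta^{+})-V(\eta)$, use $(\hat{\mathbf{P}}^{*})^{+}=\mathbf{F}\mathbf{G}\mathbf{F}^{T}$ from Lemma~\ref{lemma:matrices} to cancel the $\mathbf{F}$ factors, and reduce everything to $(\hat{\mathbf{P}}^{*})^{-1}-\mathbf{\mathcal{A}}^{T}(\mathbf{\mathcal{A}}\hat{\mathbf{P}}^{*}\mathbf{\mathcal{A}}^{T}+\mathbf{W})^{-1}\mathbf{\mathcal{A}}\succ\mathbf{0}$. The only difference is that you carry out this last positive-definiteness step explicitly via the matrix inversion lemma (and note the need for uniform bounds on $\hat{\mathbf{P}}^{*}(k)$), whereas the paper delegates it to Lemma~3 of~\cite{Olfati2009DKF}; your version is simply more self-contained.
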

\begin{proof}
Given $V(\eta) = \eta^T (\hat{\mathbf{P}}^*)^{-1} \eta$ as Lyapunov function candidate,
\begin{equation}
\begin{aligned}
    \delta V =& (\eta^+)^T ((\hat{\mathbf{P}}^*)^+)^{-1} \eta^+ - \eta^T (\hat{\mathbf{P}}^*)^{-1} \eta =\\
              & \eta^T\left.(\mathbf{\mathcal{A}}^T\mathbf{F}^T ((\hat{\mathbf{P}}^*)^+)^{-1} \mathbf{F} \mathbf{\mathcal{A}} - (\hat{\mathbf{P}}^*)^{-1})\right. \eta = \\
              & \eta^T\left.(\mathbf{\mathcal{A}}^T\mathbf{G}^{-1} \mathbf{\mathcal{A}} - (\hat{\mathbf{P}}^*)^{-1})\right. \eta = \\
              & -\eta^T\left.((\hat{\mathbf{P}}^*)^{-1} - \mathbf{\mathcal{A}}^T(\mathbf{\mathcal{A}}\hat{\mathbf{P}}^*\mathbf{\mathcal{A}}^T + \mathbf{W})^{-1} \mathbf{\mathcal{A}})\right. \eta = \\
              & -\eta^T \Lambda \eta,
\end{aligned}
\end{equation}
with $\mathbf{W} = \mathbf{\mathcal{Q}} + \mathbf{T}\mathbf{R}^{-1}\mathbf{T}^T \succ \mathbf{0}$ and $\Lambda = (\hat{\mathbf{P}}^*)^{-1} - \mathbf{\mathcal{A}}^T(\mathbf{\mathcal{A}}\hat{\mathbf{P}}^*\mathbf{\mathcal{A}}^T + \mathbf{W})^{-1}\mathbf{\mathcal{A}}$. The rest of the proof directly follows from Lemma 3 of~\cite{Olfati2009DKF}, showing that $\Lambda \succ \mathbf{0}$.
\end{proof}
Finally, to prove stability of ECO-DKF we employ a Lemma, whose proof can be found in~\cite{Sebastian2021CDC}, which allows to rewrite ECO-DKF filter equations\blue{ and show that the optimum of optimization problem~\eqref{eq:outerLJellipse} is such that inequality~\eqref{eq:constraint31} is an equality.} 
\begin{lemma}
\label{lemma:convergence}
Given optimisation problem~\eqref{eq:outerLJellipse}, \blue{$\bar{\mathbf{S}}^*_i = \sum_{j \in \mathcal{J}_i} \lambda_{ij}^* \bar{\mathbf{S}}_j$}, for all $i$. 
\end{lemma}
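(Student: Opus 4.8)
The plan is to exploit the fact that the objective $\mathrm{Tr}(\bar{\mathbf{S}}^{-1})$ is monotone with respect to the Loewner order and to show, by a "push-to-the-boundary" contradiction argument, that the semidefinite constraint~\eqref{eq:constraint31} must be active at every optimizer. In words: making $\bar{\mathbf{S}}$ larger in the Loewner sense can only decrease the cost, so an optimizer cannot leave any slack in the upper bound $\sum_{j} \lambda_{ij}^* \bar{\mathbf{S}}_j$.

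First I would record the two order-theoretic facts I rely on. For positive-definite matrices the inverse is antitone, i.e. $\mathbf{0}\prec\mathbf{A}\preceq\mathbf{B}$ implies $\mathbf{B}^{-1}\preceq\mathbf{A}^{-1}$; and the trace of a positive semidefinite matrix is nonnegative, vanishing only at the zero matrix. Combining these, if $\mathbf{0}\prec\mathbf{A}\preceq\mathbf{B}$ then $\mathrm{Tr}(\mathbf{A}^{-1})\ge\mathrm{Tr}(\mathbf{B}^{-1})$, with equality precisely when $\mathbf{A}=\mathbf{B}$.

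The main argument then runs as follows. Let $(\bar{\mathbf{S}}_i^*,\bm{\lambda}_i^*)$ be an optimizer of~\eqref{eq:outerLJellipse} and set $\tilde{\mathbf{S}}:=\sum_{j\in\mathcal{J}_i}\lambda_{ij}^*\bar{\mathbf{S}}_j$. Since each $\bar{\mathbf{S}}_j=\bar{\mathbf{P}}_j^{-1}\succ\mathbf{0}$ and $\lambda_{ij}^*\ge 0$, we have $\tilde{\mathbf{S}}\succeq\mathbf{0}$; moreover~\eqref{eq:constraint31} forces $\mathbf{0}\prec\bar{\mathbf{S}}_i^*\preceq\tilde{\mathbf{S}}$, so $\tilde{\mathbf{S}}$ dominates a positive-definite matrix and is therefore itself positive definite (which also shows at least one $\lambda_{ij}^*>0$). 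Hence the pair $(\tilde{\mathbf{S}},\bm{\lambda}_i^*)$ is feasible: it satisfies $\mathbf{0}\prec\tilde{\mathbf{S}}\preceq\sum_{j}\lambda_{ij}^*\bar{\mathbf{S}}_j$ with equality and leaves the simplex constraints~\eqref{eq:constraint32} untouched. By the monotonicity recorded above, $\mathrm{Tr}(\tilde{\mathbf{S}}^{-1})\le\mathrm{Tr}((\bar{\mathbf{S}}_i^*)^{-1})$. Optimality of $\bar{\mathbf{S}}_i^*$ then forbids a strict decrease, forcing $\bar{\mathbf{S}}_i^*=\tilde{\mathbf{S}}=\sum_{j\in\mathcal{J}_i}\lambda_{ij}^*\bar{\mathbf{S}}_j$, which is the claim.

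The only delicate point I anticipate is the strictness of the trace decrease, namely that $\bar{\mathbf{S}}_i^*\ne\tilde{\mathbf{S}}$ yields a genuinely smaller cost for $\tilde{\mathbf{S}}$ rather than a mere tie. I would close this by writing $\mathbf{M}:=(\bar{\mathbf{S}}_i^*)^{-1}-\tilde{\mathbf{S}}^{-1}$, which is positive semidefinite by antitonicity, and observing that $\mathrm{Tr}(\mathbf{M})=0$ holds if and only if $\mathbf{M}=\mathbf{0}$, i.e. if and only if $\bar{\mathbf{S}}_i^*=\tilde{\mathbf{S}}$. Thus any slack in~\eqref{eq:constraint31} would produce $\mathrm{Tr}(\mathbf{M})>0$ and contradict optimality, completing the argument. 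A minor piece of bookkeeping, already handled above, is verifying that $\tilde{\mathbf{S}}\succ\mathbf{0}$ so that $\tilde{\mathbf{S}}^{-1}$ is well defined.
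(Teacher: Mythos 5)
Your proof is correct and complete: the antitonicity of the matrix inverse under the Loewner order, the feasibility of $(\tilde{\mathbf{S}},\bm{\lambda}_i^*)$, and the observation that a nonzero positive semidefinite gap $\mathbf{M}$ has strictly positive trace together force the constraint~\eqref{eq:constraint31} to be active at any optimizer. The paper itself defers the proof of this lemma to its conference precursor rather than giving one inline, but your ``push-to-the-boundary'' monotonicity argument is the standard and natural route for this statement, so there is nothing to flag.
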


Lemma~\ref{lemma:convergence} says that the optimisation in~\eqref{eq:outerLJellipse} becomes a standard discrete-time consensus protocol tuned to optimise the trace of the final consensus value of each node. Interestingly, the optimisation will implicitly assign largest $\lambda_{ij}^*$ to nodes equipped with better sensors and estimates, which is a positive side effect.

These ingredients lead to the stability of the filter.
\begin{theorem}\label{theorem:stability}
Algorithm~\ref{al:ECO-DKF} is a globally asymptotically stable estimator if $g_i(\bm{\lambda}^*_i)=1$ for all $i,k$.
\end{theorem}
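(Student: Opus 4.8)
The plan is to bring the noise-free estimation-error dynamics of Algorithm~\ref{al:ECO-DKF} into the canonical form $\eta^+ = \mathbf{F}\mathbf{\mathcal{A}}\eta$ postulated in Lemma~\ref{lemma:stability}, and then to close the argument with Lemmas~\ref{lemma:matrices} and~\ref{lemma:stability}. The hypothesis $g_i(\bm{\lambda}_i^*)=1$ for all $i,k$ is exactly the standing assumption of Section~\ref{sec:stability}, under which every node broadcasts and $G(k)$ is undirected and connected, so both lemmas are applicable once the recursion has been matched.

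First I would stack the filter equations and evaluate the error $\eta_i(k)=\hat{\mathbf{x}}_i^*(k)-\mathbf{x}(k)$ along the noiseless trajectory. Setting $\mathbf{w}=\mathbf{0}$ gives $\mathbf{x}(k)=\mathbf{A}\mathbf{x}(k-1)$, and unbiased sensors with $\mathbf{v}_i=\mathbf{0}$ give $\mathbf{u}_i=\mathbf{U}_i\mathbf{x}(k)$, hence $\mathbf{y}_i=\mathbf{Y}_i\mathbf{x}(k)$. Substituting these into the correction~\eqref{eq:CODKF_estimation} makes the innovation term collapse, leaving $\eta_i(k)=(\mathbf{I}-\hat{\mathbf{P}}_i^*\mathbf{Y}_i)(\bar{\mathbf{x}}_i^*-\mathbf{x}(k))$. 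Using $\hat{\mathbf{P}}_i^*=(\bar{\mathbf{S}}_i^*+\mathbf{Y}_i)^{-1}$ from~\eqref{eq:KF3} I can rewrite $\mathbf{I}-\hat{\mathbf{P}}_i^*\mathbf{Y}_i=\hat{\mathbf{P}}_i^*\bar{\mathbf{S}}_i^*$, so that in stacked form the error after correction equals $\eta=\mathbf{F}\,\bar{\eta}^*$, which is precisely statement~1 of Lemma~\ref{lemma:matrices}.

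The crux is the prediction-aggregation step, and here I would invoke Lemma~\ref{lemma:convergence} to replace the inequality~\eqref{eq:constraint31} by the tight relation $\bar{\mathbf{S}}_i^*=\sum_{j\in\mathcal{J}_i}\lambda_{ij}^*\bar{\mathbf{S}}_j$. Together with $\bar{\mathbf{P}}_i^*=(\bar{\mathbf{S}}_i^*)^{-1}$ and $\bar{\mathbf{x}}_i^*=\bar{\mathbf{P}}_i^*\sum_{j\in\mathcal{J}_i}\lambda_{ij}^*\bar{\mathbf{S}}_j\bar{\mathbf{x}}_j$, the true-state contribution satisfies $\bar{\mathbf{P}}_i^*\bar{\mathbf{S}}_i^*\mathbf{x}(k)=\mathbf{x}(k)$ and cancels exactly. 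This cancellation is what makes the aggregation unbiased in the sense of Definition~\ref{definition:consistency}, and it leaves the homogeneous recursion $\bar{\mathbf{x}}_i^*-\mathbf{x}(k)=\bar{\mathbf{P}}_i^*\sum_{j\in\mathcal{J}_i}\lambda_{ij}^*\bar{\mathbf{S}}_j\mathbf{A}\eta_j(k-1)$. This is exactly where ECO-DKF becomes the optimal consensus over estimates anticipated after Lemma~\ref{lemma:convergence}: the information-weighted averaging of neighbouring predictions reproduces the neighbour coupling carried by $\mathbf{N}_w$ in $\mathbf{F}$. Folding this back through $\eta=\mathbf{F}\bar{\eta}^*$ and identifying the resulting transition operator with $\mathbf{F}\mathbf{\mathcal{A}}$ delivers the noise-free dynamics $\eta^+=\mathbf{F}\mathbf{\mathcal{A}}\eta$ demanded by Lemma~\ref{lemma:stability}.

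With the dynamics in canonical form, I would conclude by applying Lemma~\ref{lemma:stability} directly: using $\hat{\mathbf{P}}^*=\mathbf{F}\mathbf{G}\mathbf{F}^T$ from Lemma~\ref{lemma:matrices}, the increment of $V(\eta)=\eta^T(\hat{\mathbf{P}}^*)^{-1}\eta$ collapses to $-\eta^T\Lambda\eta$ with $\Lambda=(\hat{\mathbf{P}}^*)^{-1}-\mathbf{\mathcal{A}}^T(\mathbf{\mathcal{A}}\hat{\mathbf{P}}^*\mathbf{\mathcal{A}}^T+\mathbf{W})^{-1}\mathbf{\mathcal{A}}$ and $\mathbf{W}=\mathbf{\mathcal{Q}}+\mathbf{T}\mathbf{R}^{-1}\mathbf{T}^T\succ\mathbf{0}$, so $V$ is a strict Lyapunov function and global asymptotic stability follows. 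I expect the main obstacle to be the identification in the previous paragraph: showing that the information-weighted consensus aggregation collapses exactly onto the gain $\mathbf{F}=\hat{\mathbf{P}}^*(\bar{\mathbf{P}}^*)^{-1}$, rather than onto a merely consensus-coupled transition, so that Lemma~\ref{lemma:stability} applies verbatim. This step leans critically on the tightness supplied by Lemma~\ref{lemma:convergence} together with connectivity of $G(k)$ and network-observability (Definition~\ref{definition:observability}), which jointly guarantee that the aggregated predicted information stays bounded and that $\Lambda\succ\mathbf{0}$.
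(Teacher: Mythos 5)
Your derivation is on the right track up to the point where the aggregation enters, but the final identification is where the proof actually lives, and you leave it unresolved. From your own intermediate relation $\bar{\mathbf{x}}_i^*-\mathbf{x}(k)=\bar{\mathbf{P}}_i^*\sum_{j\in\mathcal{J}_i}\lambda_{ij}^*\bar{\mathbf{S}}_j\mathbf{A}\eta_j(k-1)$, the stacked noise-free dynamics are \emph{not} $\eta^+=\mathbf{F}\mathbf{\mathcal{A}}\eta$ but $\eta^+=\mathbf{F}\mathbf{L}_w\mathbf{\mathcal{A}}\eta$, where $\mathbf{L}_w$ is the block matrix with $\lfloor\mathbf{L}_w\rfloor_{ij}=\bar{\mathbf{P}}_i^*\lambda_{ij}^*\bar{\mathbf{S}}_j$ for $j\in\mathcal{J}_i$ and $\mathbf{0}$ otherwise. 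This matrix does not collapse to the identity: its diagonal block is $\bar{\mathbf{P}}_i^*\lambda_{ii}^*\bar{\mathbf{S}}_i\neq\mathbf{I}$ and its off-diagonal blocks are generically nonzero. So Lemma~\ref{lemma:stability} does \emph{not} apply verbatim, and the "main obstacle" you flag at the end is precisely the step the theorem requires; hoping that the operator "collapses exactly onto the gain $\mathbf{F}$" is the wrong resolution, because it does not.

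The missing idea is to accept the extra factor and show it is harmless. Lemma~\ref{lemma:convergence} gives $\sum_{j\in\mathcal{J}_i}\bar{\mathbf{P}}_i^*\lambda_{ij}^*\bar{\mathbf{S}}_j=\bar{\mathbf{P}}_i^*\bar{\mathbf{S}}_i^*=\mathbf{I}$, i.e.\ $\mathbf{L}_w$ is row-stochastic, hence its eigenvalues have modulus at most $1$ and $\mathbf{I}\succeq\mathbf{L}_w$. The Lyapunov decrement then becomes $-\eta^T\Lambda'\eta$ with $\Lambda'=(\hat{\mathbf{P}}^*)^{-1}-\mathbf{\mathcal{A}}^T\mathbf{L}_w^T\mathbf{G}^{-1}\mathbf{L}_w\mathbf{\mathcal{A}}$, and $\mathbf{I}\succeq\mathbf{L}_w$ implies $\mathbf{\mathcal{A}}^T\mathbf{G}^{-1}\mathbf{\mathcal{A}}\succeq\mathbf{\mathcal{A}}^T\mathbf{L}_w^T\mathbf{G}^{-1}\mathbf{L}_w\mathbf{\mathcal{A}}$, so $\Lambda'\succeq\Lambda\succ\mathbf{0}$ and stability follows. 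Without this argument your proof has a genuine gap: the step from the consensus-coupled recursion to a strict Lyapunov decrease is asserted rather than proved.
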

\begin{proof}
First, rewrite Eq.~\eqref{eq:CODKF_estimation} like Eq.~\eqref{eq:KF4}. The aggregated prediction $\bar{\mathbf{x}}^*_{i}$ in Eq.~\eqref{eq:CODKF_estimation} is 
\begin{equation}\label{eq:xfusiones}
     \bar{\mathbf{x}}_{i}^* = \bar{\mathbf{P}}_{i}^* \sum_{j\in \mathcal{J}_i} \lambda_{ij}^*\bar{\mathbf{P}}_j^{-1}\bar{\mathbf{x}}_{j}.
\end{equation}
This expression can be rewritten as
\begin{equation}\label{eq:xfusiones3}
     \bar{\mathbf{x}}_{i}^* = \bar{\mathbf{P}}_{i}^*\sum_{j\in \mathcal{J}_i} \lambda_{ij}^*\bar{\mathbf{P}}_j^{-1}  \bar{\mathbf{x}}_{i}  +
     \bar{\mathbf{P}}^*_{i} \sum_{j\in \mathcal{N}_i} \lambda_{ij}^*\bar{\mathbf{P}}_j^{-1}(\bar{\mathbf{x}}_{j}-\bar{\mathbf{x}}_{i}).
\end{equation}
From Lemma~\ref{lemma:convergence}, $\bar{\mathbf{P}}_{i}^*\sum_{j\in \mathcal{J}_i} \lambda_{ij}^*\bar{\mathbf{P}}_j^{-1} \kern -0.1cm= \mathbf{I}$. Using~\eqref{eq:xfusiones3} in~\eqref{eq:CODKF_estimation} gives
\begin{equation}\label{eq:rewrite_sta}
\begin{aligned}
     \hat{\mathbf{x}}_{i}^* \kern -0.1cm= \kern -0.1cm\bar{\mathbf{x}}_{i} \kern -0.1cm + \kern -0.1cm\hat{\mathbf{P}}^*_i(\mathbf{y}_i \kern -0.1cm - \kern -0.1cm\mathbf{Y}_i\bar{\mathbf{x}}_{i})  \kern -0.1cm+ \kern -0.1cm
     (\mathbf{I}\kern -0.1cm-\kern -0.1cm\hat{\mathbf{P}}^*_i\mathbf{Y}_i)\bar{\mathbf{P}}_{i}^*
     \kern -0.1cm\sum_{j\in \mathcal{N}_i} \kern -0.1cm\lambda_{ij}^*\bar{\mathbf{P}}_j^{-1}(\bar{\mathbf{x}}_{j}\kern -0.1cm-\kern -0.1cm\bar{\mathbf{x}}_{i}). 
\end{aligned}
\end{equation}
Eq.~\eqref{eq:rewrite_sta} is equivalent to Eq.~\eqref{eq:KF4} but weighting each term with the result of optimising~\eqref{eq:outerLJellipse}.
Then, the noiseless dynamics of $\eta_i$ is
\begin{equation}\label{eq:error_dynamics}
\begin{aligned}
     \eta_{i}^+ \kern -0.1cm=\kern -0.1cm \mathbf{A}\eta_{i} \kern -0.1cm- \kern -0.1cm\hat{\mathbf{P}}^*_i\mathbf{Y}_i\mathbf{A}\eta_{i}  \kern -0.1cm+ \kern -0.1cm
     (\mathbf{I}\kern -0.1cm-\kern -0.1cm\hat{\mathbf{P}}^*_i\mathbf{Y}_i)\bar{\mathbf{P}}_{i}^*\kern -0.1cm\sum_{j\in \mathcal{N}_i}\kern -0.1cm \lambda_{ij}^*\bar{\mathbf{P}}_j^{-1}\mathbf{A}(\eta_{j}\kern -0.1cm-\kern -0.1cm\eta_{i})
\end{aligned}
\end{equation}
which can also be written in compact form for the whole network as 
\begin{equation}\label{eq:error_dynamics_overall_re}
 \eta^+ = \mathbf{F}\left.\mathbf{L}_w\right. \mathbf{\mathcal{A}}\eta.
\end{equation}
where $(\cdot)^+$ is the update operator defined as in Lemma~\ref{lemma:stability} and the proof of Lemma~\ref{lemma:matrices}, $\mathbf{F}$ is defined as in Lemma~\ref{lemma:matrices}, and $\mathbf{L}_w$ is such that $\lfloor \mathbf{L}_w \rfloor_{ij} = \bar{\mathbf{P}}_{i}^*\lambda_{ij}^*\bar{\mathbf{P}}_j^{-1}$ for all $j\in \mathcal{J}_i$ and $\mathbf{0}$ otherwise. The last expression is similar to the error dynamics proved as globally asymptotically stable in Lemma~\ref{lemma:stability}, but with $\mathbf{L}_w$ in between. In the proof of Lemma~\ref{lemma:stability} it is shown that 
\begin{equation}
    \Lambda = (\hat{\mathbf{P}}^*)^{-1} - \mathbf{\mathcal{A}}^T\mathbf{G}^{-1} \mathbf{\mathcal{A}} \succ \mathbf{0}.
\end{equation}
Instead, we have that $\Lambda^{'} = (\hat{\mathbf{P}}^*)^{-1} - \mathbf{\mathcal{A}}^T\mathbf{L}_w^T\mathbf{G}^{-1}\mathbf{L}_w\mathbf{\mathcal{A}}$. If 
\begin{equation}\label{eq:statement1}
 \mathbf{\mathcal{A}}^T\mathbf{G}^{-1} \mathbf{\mathcal{A}} \succeq \mathbf{\mathcal{A}}^T\mathbf{L}_w^T\mathbf{G}^{-1}\mathbf{L}_w\mathbf{\mathcal{A}} \quad \Rightarrow \quad \mathbf{I} \succeq \mathbf{L}_w, 
\end{equation}
then $\Lambda^{'} \succ \mathbf{0}$ and global asymptotic stability is proved. Given Lemma~\ref{lemma:convergence}, $\mathbf{L}_w$ is a row-stochastic matrix. Therefore, by linear algebra results, the absolute value of any eigenvalue of $\mathbf{L}_w$ is less than or equal to $1$. This means that the eigenvalues of matrix $\mathbf{L}_w - \mathbf{I}$ are all negative or equal to $0$ and  
\begin{equation}\label{eq:statement3}
 \mathbf{0} \succeq \mathbf{L}_w - \mathbf{I} \Rightarrow \mathbf{I} \succeq \mathbf{L}_w.
\end{equation}
Thus,~\eqref{eq:statement1} holds and $\Lambda^{'} \succ \mathbf{0}$, concluding the proof.
\end{proof}

Notice that stability holds independently on certification. 
The next step is to demonstrate optimality under positive certification. 

\begin{theorem}\label{theorem:optimality}
 Assume that the solution of~\eqref{eq:outerLJellipse} is certified as optimal. Then, Algorithm~\ref{al:ECO-DKF} is optimal in the sense of Problem~\ref{problem:ECO-DKF} and~\eqref{eq:outerLJellipse} provides the optimal consensus gain of Eq.~\eqref{eq:KF4}.
\end{theorem}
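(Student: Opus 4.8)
The plan is to reduce the per-instant square-error minimisation of Problem~\ref{problem:ECO-DKF} to the trace optimisation solved by~\eqref{eq:outerLJellipse}, and then to use the certificate of Proposition~\ref{proposition:certifiability} to promote the relaxed optimum to the global optimum of the original NP-hard problem. First I would write the cost as $\textnormal{E}[||\hat{\mathbf{x}}_i-\mathbf{x}||^2]=\textnormal{Tr}(\hat{\mathbf{P}}_i)$ and note that, under restrictions~1 and~2 of Problem~\ref{problem:ECO-DKF}, the local and one-hop data $\{\mathbf{u}_j,\mathbf{U}_j,\bar{\mathbf{s}}_j,\bar{\mathbf{S}}_j\}_{j\in\mathcal{J}_i}$ are fixed, so by the correction step~\eqref{eq:KF3} we have $\hat{\mathbf{P}}_i^*=(\bar{\mathbf{S}}_i^*+\mathbf{Y}_i)^{-1}$ with $\mathbf{Y}_i$ determined by the uncorrelated measurements. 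The only remaining degree of freedom is the aggregated predicted information matrix $\bar{\mathbf{S}}_i^*$. Since the map $\mathbf{S}\mapsto\textnormal{Tr}((\mathbf{S}+\mathbf{Y}_i)^{-1})$ is monotonically decreasing in the positive semi-definite order, minimising $\textnormal{Tr}(\hat{\mathbf{P}}_i^*)$ amounts to choosing the largest admissible $\bar{\mathbf{S}}_i^*$, which is exactly the information-maximising objective of~\eqref{eq:outerLJellipse}; this also justifies the choice of the trace as cost function announced in Section~\ref{sec:algorithm}.

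Next I would characterise the admissible $\bar{\mathbf{S}}_i^*$. Because the correlations among neighbouring estimates are unknown, restriction~3 (consistency, Definition~\ref{definition:consistency}) requires the fused estimate to be consistent for \emph{every} feasible correlation; this holds if and only if the fused ellipsoid $\varepsilon_i^i$ contains the intersection $\mathcal{F}_i$ of the neighbouring ellipsoids (Definition~\ref{definition:intersection}). Hence the set of admissible aggregations coincides with the feasible set of Problem~\ref{problem:intersection}, and the best consistent aggregation is the minimiser of $\textnormal{Tr}(\hat{\mathbf{P}}_i^*)$ over $\varepsilon_i^i\supseteq\mathcal{F}_i$, equivalently the NP-hard problem~\eqref{eq:traceProblem}. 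Whenever the solution of~\eqref{eq:outerLJellipse} is certified, i.e. $\mathbb{C}_i=\rho_i=1$, Proposition~\ref{proposition:certifiability} guarantees that this solution is the global optimum of~\eqref{eq:traceProblem}. Combining this with the reduction above and with the consistency of $\bar{\mathbf{S}}_i^*$ (Proposition~\ref{proposition:consistency}), $\hat{\mathbf{P}}_i^*$ attains the smallest trace compatible with restrictions~1--4, so Algorithm~\ref{al:ECO-DKF} minimises $\textnormal{E}[||\hat{\mathbf{x}}_i(k)-\mathbf{x}(k)||^2]$ at every $k$, i.e. it is optimal in the sense of Problem~\ref{problem:ECO-DKF}.

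For the second claim I would reuse the algebra already carried out in the proof of Theorem~\ref{theorem:stability}. By Lemma~\ref{lemma:convergence} the certified optimum saturates~\eqref{eq:constraint31}, $\bar{\mathbf{S}}_i^*=\sum_{j\in\mathcal{J}_i}\lambda_{ij}^*\bar{\mathbf{S}}_j$, so the aggregated prediction~\eqref{eq:xfusiones} is the weighted fusion with weights $\lambda_{ij}^*$, and~\eqref{eq:CODKF_estimation} rewrites as~\eqref{eq:rewrite_sta}, which matches the consensus-based update~\eqref{eq:KF4} with the consensus term carrying the weights $\lambda_{ij}^*$ in place of the hand-tuned gain $\gamma$. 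Since these weights are precisely the minimisers of $\textnormal{Tr}(\hat{\mathbf{P}}_i^*)$ found above, they furnish the optimal consensus gain of~\eqref{eq:KF4}, proving the second assertion.

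The step I expect to be the main obstacle is the equivalence used in the second paragraph: rigorously proving that consistency under arbitrary unknown correlations is equivalent to the containment $\varepsilon_i^i\supseteq\mathcal{F}_i$, so that Problem~\ref{problem:intersection} captures exactly the admissible fusions and no feasible algorithm can do strictly better. This is the geometric bridge between the estimation-theoretic consistency of Definition~\ref{definition:consistency} and the L\"{o}wner--John optimisation; together with the monotonicity argument that ties the trace of the \emph{predicted} information matrix to the trace of the \emph{corrected} covariance, it is what upgrades the certified SDP optimum from a conservative over-bound to the true per-node square-error optimum.
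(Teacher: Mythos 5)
Your route is genuinely different from the paper's. The paper proves optimality by recasting Problem~\ref{problem:ECO-DKF} as a Bayesian MMSE recursion over the network: it shows that the prediction and correction equations of Algorithm~\ref{al:ECO-DKF} coincide with the Bayesian prediction/correction steps when the aggregation of the prior under unknown correlations is the outer L\"{o}wner--John fusion, invokes the fundamental Gaussian identity to recover the information-form update~\eqref{eq:KF3}--\eqref{eq:CODKF_estimation}, and then reads off the consensus gain explicitly as $\gamma=(\mathbf{I}-\hat{\mathbf{P}}^*_i\mathbf{Y}_i)\bar{\mathbf{P}}_{i}^*(\hat{\mathbf{P}}^*_i)^{-1}$ in Eq.~\eqref{eq:rewrite}. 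Your treatment of the second claim (saturating~\eqref{eq:constraint31} via Lemma~\ref{lemma:convergence} and matching~\eqref{eq:rewrite_sta} with~\eqref{eq:KF4}) agrees with the paper in substance.

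There is, however, a concrete gap in your first step. You argue that because $\mathbf{S}\mapsto\textnormal{Tr}((\mathbf{S}+\mathbf{Y}_i)^{-1})$ is monotone in the L\"{o}wner order, minimising $\textnormal{Tr}(\hat{\mathbf{P}}_i^*)$ ``amounts to choosing the largest admissible $\bar{\mathbf{S}}_i^*$, which is exactly the objective of~\eqref{eq:outerLJellipse}.'' This does not follow: (i) the feasible set of~\eqref{eq:outerLJellipse} has no L\"{o}wner-maximal element in general, so ``the largest admissible $\bar{\mathbf{S}}_i^*$'' is not well defined; and (ii) the objective of~\eqref{eq:outerLJellipse} is the scalar $\textnormal{Tr}(\bar{\mathbf{S}}^{-1})$ --- the trace of the \emph{predicted} covariance --- whose minimiser need not L\"{o}wner-dominate the other feasible points, so monotonicity cannot transfer its optimality to $\textnormal{Tr}((\bar{\mathbf{S}}+\mathbf{Y}_i)^{-1})$, which is the quantity you actually need to minimise; in general $\arg\min_{\mathbf{S}\in\mathcal{C}}\textnormal{Tr}(\mathbf{S}^{-1})\neq\arg\min_{\mathbf{S}\in\mathcal{C}}\textnormal{Tr}((\mathbf{S}+\mathbf{Y})^{-1})$ over a convex set $\mathcal{C}$. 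The paper sidesteps this chain by arguing at the level of the full Bayesian recursion (identifying the MSE with $\textnormal{Tr}(\hat{\mathbf{P}}^*)$ and taking the LJ fusion as the consistency-enforcing aggregation of the prior), although it too leaves the predicted-versus-corrected trace discrepancy implicit rather than resolving it. The other obstacle you flag --- that consistency under every admissible correlation is equivalent to the containment $\varepsilon_i^i\supseteq\mathcal{F}_i$ --- is likewise asserted rather than proved in the paper, so your instinct that this is the delicate geometric bridge is correct, but you will not find the missing argument in the paper's proof either.
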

\begin{proof}
\blue{We reformulate Problem~\ref{problem:ECO-DKF} as an optimal Bayesian estimation problem over the network $\mathcal{G}$. Let $Z(k) = \{\mathbf{z}(0), \hdots, \mathbf{z}(k) \}$, where $\mathbf{z} = [\mathbf{z}^T_1, \hdots, \mathbf{z}^T_N]^T$. The posterior that minimises the MSE is given by $\hat{\mathbf{x}}(k) = \text{E}[\mathbf{x}(k)|Z(k)] = \int \mathbf{x}(k)P(\mathbf{x}(k)|Z(k))d\mathbf{x}(k)$. To obtain $P(\mathbf{x}(k)|Z(k))$ from $P(\mathbf{x}(k-1)|Z(k-1))$, the Bayesian prediction and correction equations are:
\begin{equation}
    \begin{aligned}
    &\kern -0.3cmP(\mathbf{x}(k)|Z(k\kern -0.1cm-\kern -0.1cm1)) = \kern -0.2cm \int \kern -0.2cm P(\mathbf{x}(k)|\mathbf{x}(k\kern -0.1cm-\kern -0.1cm1))P(\mathbf{x}(k\kern -0.1cm-\kern -0.1cm1)|Z(k\kern -0.1cm-\kern -0.1cm1))d\mathbf{x}(k\kern -0.1cm-\kern -0.1cm1)    
    \\
    &\kern -0.3cmP(\mathbf{x}(k)|Z(k)) =  \frac{P(\mathbf{z}(k) | \mathbf{x}(k))P(\mathbf{x}(k)| Z(k\kern -0.1cm-\kern -0.1cm1))}{\int P(\mathbf{z}(k)|\mathbf{x}(k))P(\mathbf{x}(k)|Z(k\kern -0.1cm-\kern -0.1cm1))d\mathbf{x}(k)}        
    \end{aligned}
\end{equation}
Under unknown correlations, $P(\mathbf{x}(k)|Z(k-1)) = \mathcal{N}(\bar{\mathbf{x}}(k), \bar{\mathbf{P}}(k)):$ 
\begin{equation}\label{eq:proof_opt1}
    \bar{\mathbf{x}}(k) =\mathcal{A}\hat{\mathbf{x}}^*(k-1) 
    \quad \text{ and }  \quad
    \bar{\mathbf{P}}(k) =  \mathcal{A}\hat{\mathbf{P}}^*(k-1)\mathcal{A}^T + \mathcal{Q},
\end{equation}
with $\hat{\mathbf{P}}^*(k\kern -0.1cm-\kern -0.1cm1)\kern -0.1cm=\kern -0.1cm\text{block-diag}(\hat{\mathbf{P}}_1^*(k\kern -0.1cm-\kern -0.1cm1), \hdots, \hat{\mathbf{P}}_N^*(k\kern -0.1cm-\kern -0.1cm1))$. This is the prediction step. Regarding the correction step, first, we have that \mbox{$\mathbf{y} = [\mathbf{y}_1^T, \hdots, \mathbf{y}_N^T]^T$}, \mbox{ $\mathcal{R} = \text{block-diag}(\mathbf{Y}_{1}^{-1}, \hdots, \mathbf{Y}_{N}^{-1})$}, and  
\mbox{$P(\mathbf{z}(k)|\mathbf{x}(k)) = \mathcal{N}(\mathcal{H}\mathbf{x}(k), \mathcal{R}),$}
where $\mathcal{H}$ is the block diagonal matrix such that $\mathbf{Y}^{-1}\mathbf{y}(k) = \mathcal{H}\mathbf{x}(k)$. On the other hand, \mbox{$ P(\mathbf{x}(k)|Z(k)) = \mathcal{N}(\bar{\mathbf{x}}^*(k), \bar{\mathbf{P}}^*(k))$}, where $\bar{\mathbf{x}}^*(k)$ and $\bar{\mathbf{P}}^*(k)$ are the fused predictions at each node. In particular, given the restrictions in Problem~\ref{problem:ECO-DKF}, the fused prediction that ensures consistency (constraint 3) under unknown correlations (constraint 1) and one-hop communications (constraint 2) is given by the outer Löwner-John (LJ) method~\cite{John2014Ellipsoid}. Among the different measures of ellipsoid size, $\hbox{Tr}(\bar{\mathbf{S}}^{-1})$ in problem~\eqref{eq:outerLJellipse} minimises the MSE because $\text{MSE} = \hbox{E}\left[\frac{1}{N}\sum_{i=1}^N||\hat{\mathbf{x}}^*_i-\mathbf{x}||^2\right] =  \hbox{Tr}(\hbox{block-diag}(\hat{\mathbf{P}}^*_1, \hdots, \hat{\mathbf{P}}^*_N)) = \hbox{Tr}(\hat{\mathbf{P}}^*) = \hbox{Tr}((\hat{\mathbf{S}}^*)^{-1}).$ By assumption of the theorem, the solution of~\eqref{eq:outerLJellipse} is optimal, so the associated unbiased optimal fused prediction at each node is \mbox{$\bar{\mathbf{x}}_{i}^* = \bar{\mathbf{P}}_{i}^* \sum_{j\in \mathcal{J}_i} \lambda_{ij}^*\bar{\mathbf{P}}_j^{-1}\bar{\mathbf{x}}_{j},$} which is equal to Eq.~\eqref{eq:KF_o1}. 

Using the fundamental Gaussian identity (Appendix D in \cite{Mahler2007Statistical}),  $\mathcal{N}(\mathcal{H}\mathbf{x}(k), \mathcal{R})\mathcal{N}(\bar{\mathbf{x}}^*(k), \bar{\mathbf{P}}^*(k)) = \mathcal{N}(\mathcal{H}\bar{\mathbf{x}}^*(k), \mathcal{R} + \mathcal{H}\bar{\mathbf{P}}^*(k)\mathcal{H}^T)\mathcal{N}(\bar{\mathbf{x}}^*(k), ((\bar{\mathbf{P}}^*(k))^{-1} + \mathcal{H}{\mathcal{R}}^{-1}\mathcal{H}^T)^{-1}).$ Besides, the integral $\int \mathcal{N}(\mathcal{H}\mathbf{x}(k), \mathcal{R})\mathcal{N}(\bar{\mathbf{x}}^*(k), \bar{\mathbf{P}}^*(k))d\mathbf{x}(k) = \mathcal{N}(\mathcal{H}\bar{\mathbf{x}}^*(k), \mathcal{R} + \mathcal{H}\bar{\mathbf{P}}^*(k)\mathcal{H}^T),$ and, therefore, $P(\mathbf{x}(k)|Z(k)) = \mathcal{N}(\hat{\mathbf{x}}^*(k), \hat{\mathbf{P}}^*(k))$. Finally, by definition, $(\hat{\mathbf{P}}^*(k))^{-1}\hat{\mathbf{x}}^*(k) = (\bar{\mathbf{P}}^*(k))^{-1}\bar{\mathbf{x}}^*(k) + \mathcal{H}\mathcal{R}^{-1}\mathbf{z}(k),$ and $\hat{\mathbf{P}}^*(k)(\hat{\mathbf{P}}^*(k))^{-1}\hat{\mathbf{x}}^*(k) = \hat{\mathbf{x}}^*(k) = \bar{\mathbf{x}}^*(k) + \mathbf{K}(k)(\mathbf{z}(k)  - \mathcal{H}\bar{\mathbf{x}}^*(k)).$ Using Lemma~\ref{lemma:matrices}, the update/correction operations are written in information form:
\begin{equation}\label{eq:proof_opt2}
    \begin{aligned}
        &\hat{\mathbf{P}}^*(k) = ((\bar{\mathbf{P}}^*(k))^{-1} + \mathcal{R}^{-1})^{-1}
        \\
        &
        \hat{\mathbf{x}}^*(k) = \bar{\mathbf{x}}^*(k) + \bar{\mathbf{P}}^*(k)(\mathbf{y}(k)  - \mathcal{R}^{-1})\bar{\mathbf{x}}^*(k)).
    \end{aligned}
\end{equation}
Equations~\eqref{eq:proof_opt1} and \eqref{eq:proof_opt2}, together with optimisation problem \eqref{eq:outerLJellipse}, are the ECO-DKF filter in Algorithm~\ref{al:ECO-DKF}. Besides, the expression in Eq.~\eqref{eq:rewrite_sta} can be reformulated as
\begin{equation}\label{eq:rewrite}
     \hat{\mathbf{x}}_{i}^* = \bar{\mathbf{x}}_{i} + \hat{\mathbf{P}}^*_i(\mathbf{y}_i - \mathbf{Y}_i\bar{\mathbf{x}}_{i})  + 
     \gamma\hat{\mathbf{P}}^*_i\textstyle\sum_{j\in \mathcal{N}_i} \lambda_{ij}^*\bar{\mathbf{P}}_j^{-1}(\bar{\mathbf{x}}_{j}-\bar{\mathbf{x}}_{i})
\end{equation}
with $\gamma = (\mathbf{I}-\hat{\mathbf{P}}^*_i\mathbf{Y}_i)\bar{\mathbf{P}}_{i}^*(\hat{\mathbf{P}}^*_i)^{-1}$. The latter is equivalent to Eq.~\eqref{eq:KF4} but weighting each term with the result of optimising~\eqref{eq:outerLJellipse}.

Therefore, Algorithm~\ref{al:ECO-DKF} is optimal in the sense of Problem~\ref{al:ECO-DKF} and~\eqref{eq:outerLJellipse} provides the optimal consensus gain of Eq.~\eqref{eq:KF4}.}
\end{proof}


\section{Event-triggered scheme}\label{sec:event_triggered}
In this section we describe the proposed ET general scheme and  we study its theoretical properties to show that the stability and optimality results are preserved under PJC of the network.

\subsection{Description of the ET scheme}\label{subsec:descriptionET}

Under an ET scheme, a node should communicate when it possesses relevant information for the network. Our ET rule leverages the output of~\eqref{eq:outerLJellipse} to decide the triggering. Remember that~\eqref{eq:outerLJellipse} finds the optimal combination of estimates under positive certification. In particular, $\bm{\lambda}_i^*$ weights how important is each prediction in the optimal fused estimate. If at instant $k$, $\hbox{max}(\bm{\lambda}_i^*) = \lambda_{ii}^*$, then the greatest contribution to $\bar{\mathbf{x}}_i^*$ is $\bar{\mathbf{x}}_i$, so the information at node $i$ is the best in its neighbourhood and should be broadcast. The result is the ET rule in Eq.~\eqref{eq:ETrule1}.

Some aspects must be underlined. On the one hand, at instant $k$, node $i$ receives $|\mathcal{N}_i(k)|$ messages from broadcasting neighbours. Thus, the network topology is time-varying.
On the other hand, two situations are possible in the ET rule at node $i$ and instant $k$: $|\mathcal{N}_i(k)|>0$ and $|\mathcal{N}_i(k)|=0$. The former is the core of the ET rule because it determines the communication usage. The latter, despite influencing in the bandwidth as well, must be designed to avoid that the network converges to $N$ individual KFs. As we will see later, this is key to assure stability.

\subsection{Theoretical analysis}\label{subsec:analysisET}

The two main properties analysed in Section~\ref{sec:stability} are stability and optimality. Regarding stability, the next Theorem shows that ECO-DKF is globally asymptotically stable under PJC.

\begin{theorem}\label{proposition:PJCstability}
ECO-DKF is global asymptotically stable for any time-varying graph sequence that satisfies PJC with $t < \infty$. 
\end{theorem}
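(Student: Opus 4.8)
The plan is to reduce the time-varying, event-triggered case to the constant-topology analysis of Theorem~\ref{theorem:stability} by passing from a one-step to a multi-step (windowed) Lyapunov argument. First I would write the noiseless error dynamics exactly as in Eq.~\eqref{eq:error_dynamics_overall_re}, but now with time-varying factors, $\eta^+ = \mathbf{F}(k)\mathbf{L}_w(k)\mathbf{\mathcal{A}}\eta$, where the sparsity pattern of $\mathbf{L}_w(k)$ follows the instantaneous (now directed) graph $G(k)$ induced by the triggering function~\eqref{eq:ETrule1}: if a neighbour $j$ does not broadcast at instant $k$ it drops out of $\mathcal{J}_i(k)$ and the corresponding block of $\mathbf{L}_w(k)$ vanishes. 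The structural facts of Section~\ref{sec:stability} survive verbatim for each fixed $k$: Lemma~\ref{lemma:matrices} still gives $\mathbf{F}(k) = \hat{\mathbf{P}}^*(\bar{\mathbf{P}}^*)^{-1}$ and $\hat{\mathbf{P}}^* = \mathbf{F}\mathbf{G}\mathbf{F}^T$, and Lemma~\ref{lemma:convergence} still forces every row block sum of $\mathbf{L}_w(k)$ to equal $\mathbf{I}$, i.e. $\mathbf{L}_w(k)$ is row-stochastic for any $\mathcal{J}_i(k)$. I take the same Lyapunov candidate $V(\eta) = \eta^T(\hat{\mathbf{P}}^*(k))^{-1}\eta$.

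The second step is to show that $V$ is non-increasing at every instant, whether or not $G(k)$ is connected. Repeating the computation of Lemma~\ref{lemma:stability} with the time-varying matrices gives $\delta V(k) = -\eta^T \Lambda'(k)\eta$ with $\Lambda'(k) = (\hat{\mathbf{P}}^*)^{-1} - \mathbf{\mathcal{A}}^T\mathbf{L}_w(k)^T\mathbf{G}(k)^{-1}\mathbf{L}_w(k)\mathbf{\mathcal{A}}$. Since $\mathbf{L}_w(k)$ is row-stochastic we again obtain $\mathbf{I}\succeq\mathbf{L}_w(k)$ as in~\eqref{eq:statement3}, and since the prediction covariance stays bounded by consistency (Proposition~\ref{proposition:consistency}) the matrix $\mathbf{G}(k)$ remains positive definite; together these yield $\Lambda'(k)\succeq\mathbf{0}$, i.e. $\delta V(k)\le 0$ for all $k$. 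The point is that at an isolated or locally unobservable node a single instant need only give $\Lambda'(k)\succeq\mathbf{0}$ rather than the strict inequality~\eqref{eq:statement1}, which is precisely why instantaneous connectivity is not required here.

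The third and central step is to upgrade this to a strict decrease accumulated over one PJC window. Fixing $k$ so that $\mathbf{G}(k,t)$ is strongly connected as in Definition~\ref{definition:PJC}, I write $\eta(k+t+1) = \Phi(k,t)\eta(k)$ with $\Phi(k,t) = \mathbf{F}(k+t)\mathbf{L}_w(k+t)\mathbf{\mathcal{A}}\cdots\mathbf{F}(k)\mathbf{L}_w(k)\mathbf{\mathcal{A}}$, and prove $\sum_{s=k}^{k+t}\delta V(s) < 0$ for every $\eta(k)\ne\mathbf{0}$ by contradiction. If the sum vanishes then $\delta V(s)=0$ at each $s$, which forces $\mathbf{L}_w(s)\mathbf{\mathcal{A}}\eta(s)$ into the null space of the per-step information increment throughout the window; strong connectivity of the union graph makes the product of the row-stochastic matrices $\mathbf{L}_w(s)$ contract the disagreement subspace to agreement across all nodes, and network-observability (Definition~\ref{definition:observability}) then forces that common state to be observable and hence $\eta(k)=\mathbf{0}$, a contradiction. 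Thus $\Phi(k,t)^T(\hat{\mathbf{P}}^*)^{-1}\Phi(k,t) \prec (\hat{\mathbf{P}}^*)^{-1}$ and $V$ strictly decreases over each window of length $t<\infty$.

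Finally I would invoke the standard aggregated Lyapunov theorem for time-varying systems: a function that is non-increasing at every step and strictly decreasing over every window of uniformly bounded length is a valid Lyapunov function for the $t$-step system, yielding global asymptotic stability of $\eta=\mathbf{0}$. The Bernoulli branch of the triggering rule~\eqref{eq:ETrule1}, which forces an otherwise isolated node to broadcast, is exactly what prevents the network from degenerating into $N$ decoupled filters and makes the strongly connected window promised by PJC realisable. The main obstacle is the third step: rigorously combining the consensus contraction coming from the product of \emph{directed, time-varying, non-symmetric} row-stochastic matrices $\mathbf{L}_w(s)$ with the Kalman/observability contraction coming from network-observability, so as to certify that $\delta V$ cannot vanish over an entire window unless $\eta$ is already zero.
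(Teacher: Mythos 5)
Your route is genuinely different from the paper's: you attempt a windowed Lyapunov argument (per-step non-increase of $V(\eta)=\eta^T(\hat{\mathbf{P}}^*)^{-1}\eta$ plus strict decrease over each PJC window), whereas the paper abandons the Lyapunov machinery entirely for the time-varying case. The paper bounds the worst case by $\mathbf{L}_w(k)=\mathbf{I}$, reduces to per-node dynamics $\eta_j^+=\mathbf{F}_j^{ss}\mathbf{A}\eta_j$, decomposes the error into eigenvectors of $\mathbf{F}_j^{ss}\mathbf{A}$, uses network-observability (Definition~\ref{definition:observability}) to assert that for every mode some node $j'$ contracts it, and then uses PJC to argue that this contraction propagates to every node within $t(N-1)$ instants at the cost of only a bounded amplification. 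Your approach is, in principle, the more standard and more rigorous one for time-varying Kalman-type recursions; but as written it has two concrete gaps.

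First, the validity of $V$ as a Lyapunov function in the time-varying setting requires uniform two-sided bounds $\alpha\mathbf{I}\preceq\hat{\mathbf{P}}^*(k)\preceq\beta\mathbf{I}$; without the upper bound, $V(\eta(k))\to 0$ is compatible with $\eta(k)\not\to\mathbf{0}$, because $(\hat{\mathbf{P}}^*(k))^{-1}$ may vanish. You attribute boundedness of the covariance to consistency (Proposition~\ref{proposition:consistency}), but consistency in the sense of Definition~\ref{definition:consistency} only states $\tilde{\mathbf{P}}-\mathbf{P}\succeq\mathbf{0}$, i.e.\ it bounds $\hat{\mathbf{P}}^*$ from \emph{below} by the true covariance; it gives no upper bound, and an upper bound is exactly what uniform observability over the PJC window must supply. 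Second, your central step --- that $\sum_{s=k}^{k+t}\delta V(s)<0$ unless $\eta(k)=\mathbf{0}$ --- is only sketched by contradiction, and you acknowledge this yourself. The assertion that $\delta V(s)=0$ throughout the window forces agreement via contraction of the product of directed, non-symmetric row-stochastic matrices $\mathbf{L}_w(s)$, and that network-observability then kills the agreed state, is precisely the content of the theorem and is not established by the argument given. (Note also that the per-step inequality $\mathbf{I}\succeq\mathbf{L}_w(k)\Rightarrow\mathcal{A}^T\mathbf{G}^{-1}\mathcal{A}\succeq\mathcal{A}^T\mathbf{L}_w^T\mathbf{G}^{-1}\mathbf{L}_w\mathcal{A}$, which you inherit from Theorem~\ref{theorem:stability}, does not hold for general non-symmetric $\mathbf{L}_w$ that do not commute with $\mathbf{G}^{-1}$, so even the non-strict per-step decrease needs care.) Until the window-strictness step and the covariance bounds are proved, the proposal is a plausible programme rather than a proof.
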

\begin{proof}
The error dynamics $\eta^+$ in Theorem~\ref{theorem:stability} now depends on time
\begin{equation}
    \eta^+ = \mathbf{F}(k)\left.\mathbf{L}_w(k)\right. \mathbf{\mathcal{A}}\eta,
\end{equation}
From Theorem~\ref{theorem:stability}, since $\mathbf{L}_w(k)$ is row stochastic for any graph, the absolute value of its eigenvalues is always less than or equal to $1$.
Besides, $\mathbf{F}(k)$ is always a positive definite matrix and $\mathbf{I} \succeq \mathbf{F}(k)$ according to its definition in Lemma~\ref{lemma:matrices}.
Therefore, as a worst case scenario, we can consider $\mathbf{L}_w(k)=\mathbf{I}  \Rightarrow \eta^+ = \mathbf{F}(k) \mathbf{\mathcal{A}}\eta.$

Let $j$ be one of the nodes. The error dynamics, $\eta^+_j,$ are then determined by the eigenvalues and eigenvectors of $\mathbf{F}_j(k)\mathbf{A}$. 
The convergence rate of the KF is upper bounded by its steady-state gain $\mathbf{K}^{ss}$~\cite{Grewal2014Kalman}, so let instead consider
\begin{equation}\label{eq:some}
    \eta^+_j = \mathbf{F}_j^{ss}\mathbf{A}\eta_j.
\end{equation}
Let $\bm{v}_j = \{v_j^1, \hdots, v_j^r, \hdots, v_j^n\}$ and $\mathcal{V}_j = \{\mathbf{v}_j^1, \hdots, \mathbf{v}_j^r, \hdots, \mathbf{v}_j^n\}$ be the eigenvalues and eigenvectors of $\mathbf{F}_{j}^{ss}\mathbf{A}$.
The initial conditions are expressed as $\eta_j(0) = \sum_{r=1}^n\gamma_j^r\mathbf{v}_j^r$ and the norm of the error dynamics in Eq. \eqref{eq:some} can be upper bounded by \mbox{$||\eta^+_j(k)|| \leq \sum_{r=1}^n ||\left(v_j^r\right)^k \gamma_j^r\mathbf{v}_j^r||.$}
Network observability implies that for each node $j$ and eigenvalue $v_j^r$, there exist at least a node $j^{\prime}$ such that $\kappa|\Re(v_j^r)|<1$, with $\kappa \in \mathbb{R}_{>0}$ and $\Re(v_j^r)$ the real part of $v_j^r$.
Otherwise, it would not be possible to reconstruct the state, even if the network was fully connected. 
In the worst case, due to the PJC with period equal to $t$, the estimation of node $j^{\prime}$ associated to $v_j^r$ is updated through $t(N-1)$ ECO-DKF instants before reaching node $j$.
At each instant, ECO-DKF corrects the estimates in two main phases: the optimisation in~\eqref{eq:outerLJellipse} and the KF steps. The former consists in finding the intersection of $|\mathcal{N}_i(k) \cup \{i\}|$ ellipsoids. Since the intersection can not be larger than the smallest ellipsoid, then the result is upper bounded by the best estimate among estimates at instant $k$. Thus, in the worst case the estimation error associated to $v_j^r$ remains that of $j^{\prime}$. 
On the other hand, in the worst case, the estimation error associated to $v_j^r$ is only predicted in the KF steps without using the measurements to perform the correction step of the filter, and the estimation error associated to $v_j^r$ increases as a function of the eigenvalues of $\mathbf{A}$. Therefore, when the estimation of node $j^{\prime}$ associated to $v_j^r$ reaches node $j$, the error associated to $v_j^r$ is, in the worst case, the error for $j^{\prime}$ increased by a \emph{constant} value that depends on the eigenvalues of $\mathbf{A}$ and $t(N-1)$.

Since the estimation error associated to $v_j^r$ in $j^{\prime}$ tends to zero when $k \rightarrow \infty$, then so does the error in node $j$.
Applying this result to all the eigenvectors and nodes in the network, the result is proved.
\end{proof}

The next Proposition addresses optimality.
\begin{proposition}\label{proposition:PJCoptimality}
Theorem~\ref{theorem:optimality} holds under PJC and network-observability conditions. 
\end{proposition}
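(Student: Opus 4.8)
The plan is to show that the per-instant optimality derivation in the proof of Theorem~\ref{theorem:optimality} is intrinsically \emph{local} and never invokes connectivity of $G(k)$, so it transfers essentially verbatim once the static one-hop neighbourhood is replaced by the effective, event-triggered set $\mathcal{J}_i(k)$ of neighbours that actually broadcast at instant $k$ under rule~\eqref{eq:ETrule1}. The roles of PJC (Definition~\ref{definition:PJC}) and network-observability (Definition~\ref{definition:observability}) are then only to guarantee that this per-instant optimum is non-vacuous, i.e., that the recursion stays consistent and the covariances remain bounded.

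First I would fix an instant $k$ and condition on the messages received under the ET scheme, which determine $\mathcal{J}_i(k)$. Replaying the Bayesian argument of Theorem~\ref{theorem:optimality} with this time-varying $\mathcal{J}_i(k)$, the prediction step~\eqref{eq:proof_opt1}, the fundamental Gaussian identity, and the identity relating the trace cost to the MSE are all unchanged, since none of them use that $\mathcal{J}_i(k)$ is the full neighbourhood or that the graph is connected. Hence, under positive certification of~\eqref{eq:outerLJellipse} at instant $k$, the fused prediction~\eqref{eq:KF_o1} is again the per-instant MSE-optimal consistent fusion of the received predictions, and the information-form correction~\eqref{eq:proof_opt2} together with the rewriting~\eqref{eq:rewrite} still identifies~\eqref{eq:outerLJellipse} as the optimal consensus gain of~\eqref{eq:KF4}. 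This reproduces the conclusion of Theorem~\ref{theorem:optimality} for the time-varying topology.

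Next I would close the recursion. For the per-instant optimality to compound into a valid estimator I must certify that the prior fed into each step is consistent: this follows because the outer L\"{o}wner--John fusion preserves consistency for \emph{any} $\mathcal{J}_i(k)$ through the bound~\eqref{eq:constraint31} (Proposition~\ref{proposition:consistency}), independently of which neighbours broadcast. Boundedness and convergence of the resulting covariances are then inherited from Theorem~\ref{proposition:PJCstability}, while network-observability guarantees that the state is reconstructible from the aggregate measurement information, so the Bayesian posterior is well defined at every $k$.

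The main obstacle is conceptual rather than computational: I must argue that ``optimality in the sense of Problem~\ref{problem:ECO-DKF}'' is the correct target under the ET scheme. Because constraints~2 and~3 of Problem~\ref{problem:ECO-DKF} already embed one-hop communication and PJC, the admissible information at node $i$ at instant $k$ is exactly what the ET rule delivers, so the per-instant optimum is defined relative to $\mathcal{J}_i(k)$ and ECO-DKF attains it. The delicate point is that a silent neighbour shrinks $\mathcal{J}_i(k)$, yet by the consistency of~\eqref{eq:constraint31} and the stability guaranteed by PJC together with network-observability the priors never degrade and the covariances stay bounded; hence the per-instant optimality remains non-vacuous and the conclusion of Theorem~\ref{theorem:optimality} persists under PJC.
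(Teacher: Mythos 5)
Your proposal is correct and follows essentially the same route as the paper: the key observation in both is that the per-instant optimality and certifiability arguments of Proposition~\ref{proposition:certifiability} and Theorem~\ref{theorem:optimality} are purely local and hold for any cardinality of $\mathcal{N}_i(k)$ (including the trivial case $|\mathcal{N}_i(k)|=0$, where $\bar{\mathbf{x}}_i^*=\bar{\mathbf{x}}_i$ and $\bar{\mathbf{P}}_i^*=\bar{\mathbf{P}}_i$), so connectivity plays no role. Your additional care in closing the recursion via consistency and PJC-stability is sound but is material the paper delegates to Proposition~\ref{proposition:consistency} and Theorem~\ref{proposition:PJCstability} rather than including here.
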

\begin{proof}
At instant $k$, node $i$ receives $|\mathcal{N}_i(k)|$ messages and conducts optimisation~\eqref{eq:outerLJellipse} and certification~\eqref{eq:traceRelax} steps. The case where $|\mathcal{N}_i(k)|=0$ is trivial and $\bar{\mathbf{x}}_i^*(k)=\bar{\mathbf{x}}_i(k)$ and $\bar{\mathbf{P}}_i^*(k)=\bar{\mathbf{P}}_i(k)$, according to restrictions 1 and 2 in Problem~\ref{problem:ECO-DKF}. In the general case, certifiability and optimality in Proposition~\ref{proposition:certifiability} and Theorem~\ref{theorem:optimality} are proved for any $|\mathcal{N}_i(k)|$. Therefore, Theorem~\ref{theorem:optimality} holds irrespective of the connectivity and observability of the network.
\end{proof}
The previous results are interesting not only because they demonstrate that ECO-DKF is still optimal and stable under PJC, but also because this can be done without requiring any additional information. 
Regarding the ET rule~\eqref{eq:ETrule1}, by letting each node to communicate with probability $p>0$ when it does not receive information, 
we force that nodes do not isolate forever, and ET rule~\eqref{eq:ETrule1} guarantees PJC with an expected value of $t$ bounded.

To complete the solution, we show that ECO-DKF is a consistent estimator under PJC conditions.
\begin{proposition}\label{proposition:consistency}
ECO-DKF is a consistent estimator under PJC conditions.
\end{proposition}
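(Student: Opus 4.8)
The plan is to establish consistency by induction on the discrete time $k$, showing that the approximation error covariances computed by ECO-DKF always over-bound the true error covariances in the positive semi-definite sense, and that the estimates remain unbiased. By Definition~\ref{definition:consistency}, I need to prove two things at every node $i$ and instant $k$: first, $\text{E}[(\hat{\mathbf{x}}_i^*-\mathbf{x})]=\mathbf{0}$ (unbiasedness), and second, $\hat{\mathbf{P}}_i^* - \text{E}[(\hat{\mathbf{x}}_i^*-\mathbf{x})(\hat{\mathbf{x}}_i^*-\mathbf{x})^T] \succeq \mathbf{0}$ (covariance over-bounding). The base case follows from the initialisation in Algorithm~\ref{al:ECO-DKF}, assuming $\hat{\mathbf{P}}_i^*(0)=\mathbf{P}_0$ is chosen consistent with $\hat{\mathbf{x}}_i^*(0)=\mathbf{x}_0$.

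First I would handle the prediction step. Assuming consistency of $\hat{\mathbf{x}}_i^*(k-1),\hat{\mathbf{P}}_i^*(k-1)$ as the inductive hypothesis, unbiasedness of $\bar{\mathbf{x}}_i(k)=\mathbf{A}\hat{\mathbf{x}}_i^*(k-1)$ follows since $\mathbf{w}(k)$ is zero-mean, and the standard covariance propagation~\eqref{eq:KF5} together with $\mathbf{Q}\succeq\mathbf{0}$ preserves the over-bounding property. Next I would treat the fusion step, which is the core of the argument. The key is the constraint~\eqref{eq:constraint31}, $\bar{\mathbf{S}}_i^* \preceq \sum_{j\in\mathcal{J}_i}\lambda_{ij}^*\bar{\mathbf{S}}_j$ with $\sum_j\lambda_{ij}^*\leq 1$, which is precisely the Covariance Intersection structure. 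In inverse (covariance) form this guarantees $\bar{\mathbf{P}}_i^* = (\bar{\mathbf{S}}_i^*)^{-1} \succeq$ the true fused error covariance regardless of the unknown cross-correlations between the neighbouring predictions; this is the classical consistency property of CI, which holds for any convex weights and any correlation structure. I would invoke this together with Lemma~\ref{lemma:convergence} (which gives equality in~\eqref{eq:constraint31}) to conclude that the fused prediction $\bar{\mathbf{x}}_i^*$ is unbiased and $\bar{\mathbf{P}}_i^*$ is consistent. Finally, the correction step~\eqref{eq:KF3}--\eqref{eq:CODKF_estimation} fuses the consistent predicted information matrix $\bar{\mathbf{S}}_i^*$ with the measurement information $\mathbf{Y}_i$, which is exact because measurements are independent across nodes (constraint in Section~\ref{sec:preliminaries}); the standard information-form update preserves both unbiasedness and over-bounding.

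The main obstacle is the fusion step under unknown correlations: I must argue carefully that the CI-type bound in~\eqref{eq:constraint31} yields a genuinely consistent $\bar{\mathbf{P}}_i^*$ for \emph{every} admissible cross-correlation between $\bar{\mathbf{x}}_j$, not merely for the uncorrelated case. The subtlety is that the individual predictions $\bar{\mathbf{x}}_j$ are themselves derived from estimates that have been fused in previous rounds, so their mutual correlations are genuinely unknown and time-dependent. The resolution is that Covariance Intersection is designed precisely for this scenario and its consistency guarantee is correlation-agnostic, so the constraint structure of~\eqref{eq:outerLJellipse} transfers consistency from the inputs $\{\bar{\mathbf{P}}_j\}$ to the output $\bar{\mathbf{P}}_i^*$ without any correlation assumption. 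Under PJC the argument is unchanged at each instant since consistency is a per-step property: when $|\mathcal{N}_i(k)|=0$ the node simply propagates its own consistent prediction (as noted in the proof of Proposition~\ref{proposition:PJCoptimality}), and when $|\mathcal{N}_i(k)|>0$ the CI fusion applies to whichever neighbours broadcast, so the PJC condition does not affect the per-instant consistency and the induction closes.
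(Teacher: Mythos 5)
Your proof is correct and shares its central ingredient with the paper's: both arguments rest on the Covariance-Intersection/outer-L\"owner--John structure of constraint~\eqref{eq:constraint31} to get correlation-agnostic over-bounding of the fused covariance, and both use Lemma~\ref{lemma:convergence} (equality in~\eqref{eq:constraint31}, hence $\bar{\mathbf{P}}_i^*\sum_j\lambda_{ij}^*\bar{\mathbf{S}}_j=\mathbf{I}$) to get unbiasedness of $\bar{\mathbf{x}}_i^*$. Where you diverge is in how the remaining steps of the filter are handled. The paper disposes of the prediction and correction steps by appealing to Theorem~\ref{proposition:PJCstability}: it argues that because ECO-DKF is globally asymptotically stable under PJC, ``the update and prediction steps preserve consistency as well.'' You instead run an explicit induction on $k$ and argue each step directly: the prediction step preserves consistency because $\mathbf{Q}\succeq\mathbf{0}$ and covariance propagation is monotone, and the correction step preserves it because the measurement information $\mathbf{Y}_i$ is exact and independent of the prediction error, so the information-form update with a conservative prior yields a conservative posterior. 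Your route is the more self-contained and, arguably, the more rigorous one: stability of the error dynamics does not by itself imply that the computed covariances dominate the true ones, so the paper's final step is really an appeal to standard KF covariance algebra in disguise, which you make explicit. The paper's version is shorter and leans on results it has already established; yours buys a cleaner logical dependency (consistency is proved per-step without invoking asymptotic stability) at the cost of a longer argument. Both handle the $|\mathcal{N}_i(k)|=0$ case and the PJC setting the same way, by noting that consistency is a per-instant property.
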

\begin{proof}
Recalling the definition of~\eqref{eq:outerLJellipse}, the outer LJ method gives a fused $\bar{\mathbf{S}}_i^*$ which is always consistent. Moreover, considering that we assume unbiased sensors, $\bar{\mathbf{x}}_i^*$ preserves consistency. This is because: (i) constraint~\eqref{eq:constraint32} ensures positive weights and their sum less or equal to $1$, so it is a non-increasing combination of estimates; (ii) Lemma~\ref{lemma:convergence} shows that the sum of the weights is always equal to $1$, so it is a non-decreasing combination of estimates. Thus, the aggregation of estimates in~\eqref{eq:KF_o1} is also unbiased. Besides, Theorem~\ref{proposition:PJCstability} states that ECO-DKF is globally asymptotically stable under PJC conditions, which means that once estimates are aggregated, the update and prediction steps preserve consistency as well. Therefore, the whole algorithm is consistent.
\end{proof}


\section{Illustrative results}\label{sec:examples}
We first evaluate the performance of the Time-Triggered (TT) ECO-DKF, i.e., ECO-DKF with $g_i(\bm{\lambda}^*_i(k-1),k)=1$ for all $i,k$, to verify the stability and optimality properties. Then, different ET rules are compared, studying their impact in the communication bandwidth and convergence speed.

\subsection{Time-triggered ECO-DKF vs Time-triggered  DKFs}\label{subsec:illustrative_example_TT}

We compare ECO-DKF in the TT setting with respect to other TT DKFs: (i) \textit{AtA-ECO-DKF}, our proposal but assuming that $(i,j)\in E(k)$ $\forall i\neq j, k$, (ii) \textit{OCDFK}, Algorithm 3 from~\cite{Olfati2007DKF} with consensus gain $\gamma_i(k) = 10^{-4}/(1 + ||\mathbf{M}_i(k)||_F)$ to ensure stability, with $||\cdot||_F$ the Frobenius norm, (iii) \textit{TCDFK}, Algorithm 2 from~\cite{Talebi2019Distributed} with one consensus step for a fair comparison, (iv) \textit{HDfKF}, complete algorithm in~\cite{Hu2011Diffusion}, (v) \textit{HADfKF}, simplified algorithm in~\cite{Hu2011Diffusion}, (vi) \textit{CKF}, centralised equivalent KF. The target system is the same 2D particle in a circular orbit used in~\cite{Sebastian2021CDC}, with the same parameters.

We analyse two scenarios. Experiment $1$ initialises a random sensor network, with appropriate parameters to obtain a sparse connected topology. Then, a random uniform distribution decides the quantities sensors measure among two options: measuring $x$ or $y$. Note that local observability never holds, to test ECO-DKF in a network-observability setting. $\mathbf{H}_i$ is picked uniformly in the range $[1,3]$. A Bernoulli distribution with $p=0.5$ decides the diagonal of $\mathbf{R}_i$ in the range $[3,5]\times 10^{-2}$ or $[3,5]$, i.e., high-quality or low-quality. This is done $100$ times, computing the averaged Mean Square Error MSE$:= \frac{1}{100} \sum_{i=1}^{N} \hbox{E}[||\hat{\mathbf{x}}_i-\mathbf{x}||^2]$ over the experiments as in~\cite{Cattivelli2010Diffusion} and~\cite{Hu2011Diffusion}. Experiment $2$ is the same as Experiment $1$, but only one sensor is of high quality, and sensors now can measure $x$, $y$ or both.

The results of Experiments $1$ are shown in Fig.~\ref{fig:RMSEvsREAL_transient}a. The best performance among the distributed estimators is obtained by ECO-DKF, with a difference of more than an order of magnitude with the other state-of-the-art filters in steady-state MSE. Besides, ECO-DKF is also the fastest filter, with the fastest asymptotic convergence to the CKF. The OCDKF achieves similar MSE performance than the AtA-ECO-DKF: adding more neighbours hinders the optimization problem in ECO-DKF. The other consensus-based DKF, TCDKF, is far from them because it needs multiple consensus steps within instants for a good performance~\cite{Talebi2019Distributed}. The diffusion-based DKF, HDfKF and HADfKF, exhibit worse suboptimal performance as well. These differences also hold in Experiment $2$, as it is shown in Fig.~\ref{fig:RMSEvsREAL_transient}b. Therefore, ECO-DKF obtains the best performance among DKFs. 

\begin{figure}[!ht]
\centering
\begin{tabular}{ccc}
\subcaptionbox{Experiment 1}{\includegraphics[width=0.35\columnwidth]{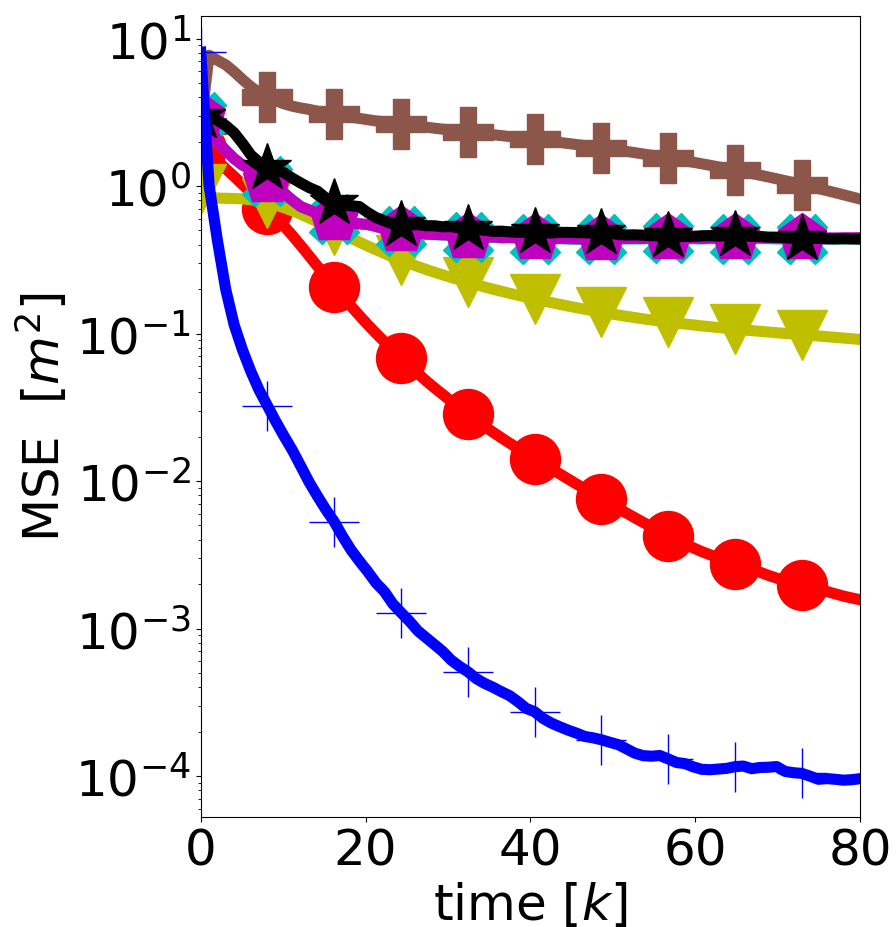}}
&
\subcaptionbox{Experiment 2}{\includegraphics[width=0.35\columnwidth]{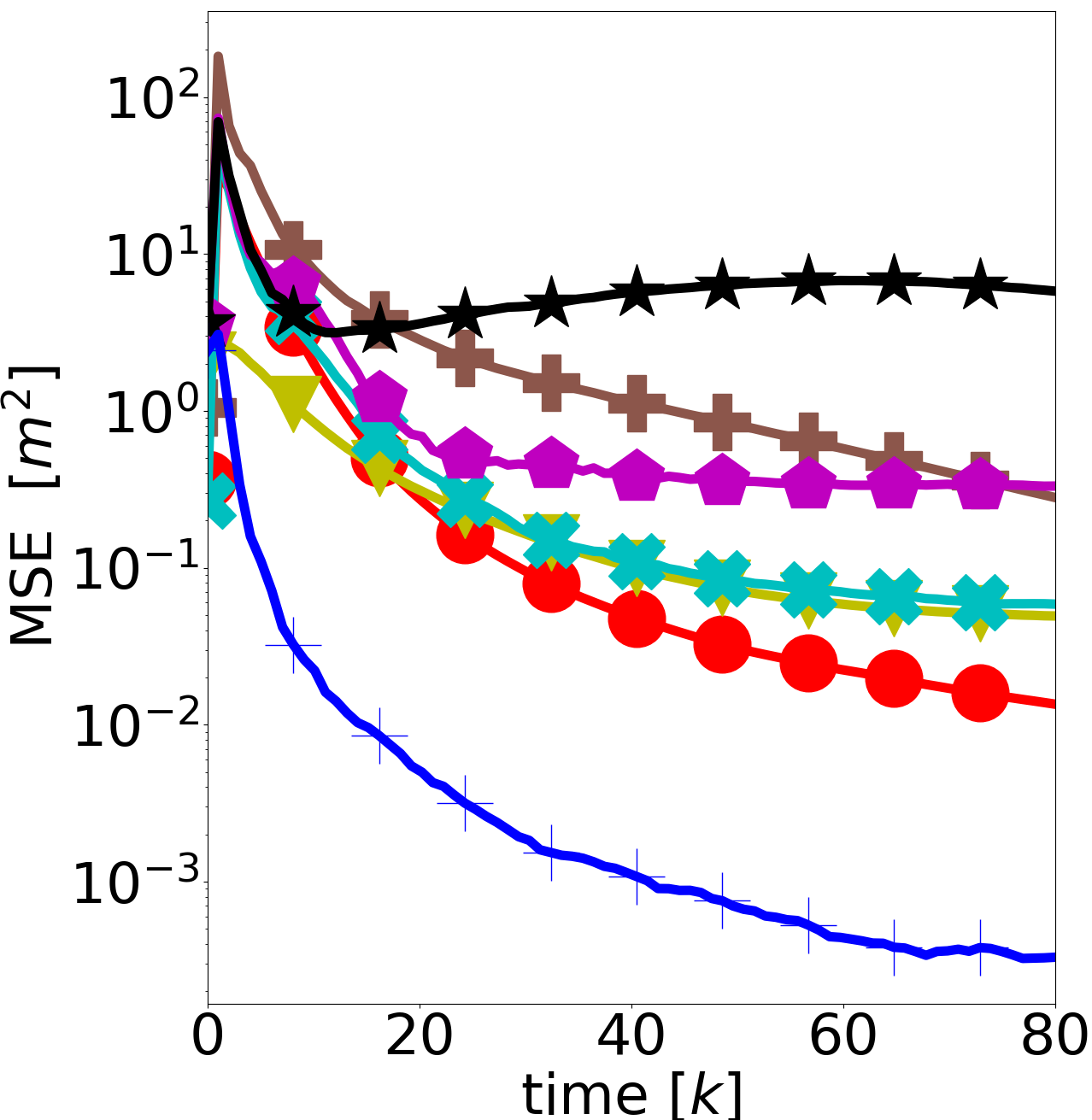}}
&
\includegraphics[width=0.17\columnwidth]{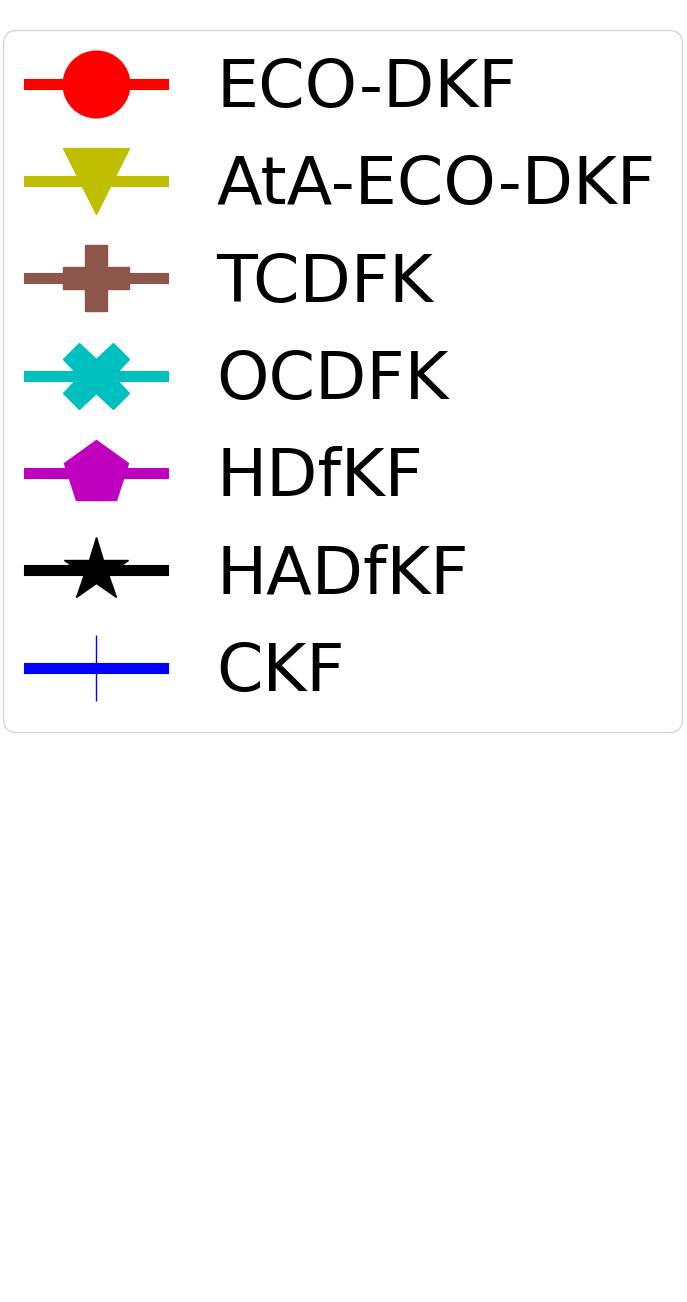}
\end{tabular}
\caption{{\small MSE for the different estimators in the TT case.}}
\label{fig:RMSEvsREAL_transient}
\end{figure}

We have assessed the tightness of the relaxation by checking the certification values. A $99.33\%$ of the times in Experiment $1$ and $95.92\%$ of the times in Experiment $2$, $\mathbb{C}_i = \rho_i = 1$, so the SDP relaxation is \textit{tight}. It is also noteworthy that the lower values of $\rho$ are at $0.65$, and this is only in the initialisation.

In relation with times, the computation of optimisations~\eqref{eq:outerLJellipse} and~\eqref{eq:traceRelax}, run at each iteration, takes $50$ms using a standard laptop. This is consistent with the election of the sample time $T=100$ms. Problem~\eqref{eq:outerLJellipse} is solved in a polynomial number of iterations and arithmetic operations~\cite{nesterov1994interior}, so the computational cost scales with the number of neighbours. Whether it is preferable to compute more or communicate more will depend on the network resources, but in general, communicating multiple times may imply a total run time and risk of disturbances greater than ours. \blue{In this sense, we recall that ECO-DKF only communicates, at most, once per instant $k$.}

\subsection{Evaluation of event-triggered schemes in  ECO-DKF}\label{subsec:illustrative_example_ET}

We now evaluate the impact of the ET rules in the NoB and MSE. We repeat Experiment $1$ and $2$ but with different ET rules: TT ECO-DKF (\#O), the original TT ECO-DKF tested in Subsection~\ref{subsec:illustrative_example_ET}; connected ECO-DKF (\#C), which corresponds to ET rule~\eqref{eq:ETrule1} with $p(k)=1$ $\forall k$; disconnected ECO-DKF (\#D), which corresponds to ET rule~\eqref{eq:ETrule1} with $p(k)=0$ $\forall k$; stochastic ECO-DKF (\#S), which corresponds to ET rule~\eqref{eq:ETrule1} with $p(k) \sim \mathcal{NBB}(\alpha,\beta,n,k_l,k)$. Here, $\mathcal{NBB}(\alpha,\beta,n,k_l,k)$ is a normalised beta-binomial distribution whose shape is given by $\alpha,\beta,n$. Given the shape, $p(k)$ increases when $k_l$ increases, where $k_l \in \mathbb{N}$ is the number of instants that have passed since the last time $g_i(\bm{\lambda}^*_i)=1$. The normalisation of the beta binomial is to force that, at some point, $p(k)=1$, i.e., to force PJC. Finally, we also test a Jensen-Shannon (JS) divergence ECO-DKF (\#J), which corresponds to a SoD rule based on the JS divergence
\begin{equation}\label{eq:ETrule3}
    g_i(\bm{\lambda}_i^*) = \left\{
\begin{aligned}
    1, & \hbox{ if } D_{JS}(\bar{\mathbf{S}}_i^*(k)||\bar{\mathbf{S}}_i^*(k-k')) > \tau 
    \\ 
    0, & \hbox{ otherwise }
\end{aligned}
\right. \forall i \in V
\end{equation}
where $\tau$ is a tuned threshold, $k'$ is the last time $g_i(\bm{\lambda}_i^*)=1$ and JS is a symmetric version of the KL divergence.
Note that ET rule~\eqref{eq:ETrule3} implicitly depends on $\bm{\lambda}_i^*$ since it is the divergence between the latest computation of~\eqref{eq:outerLJellipse} and the latest time node ET rule~\eqref{eq:ETrule3} was equal to $1$.  
For \#S, we have calibrated two combinations of parameters: $\alpha=\{5,2\}$, $\beta = \{1,2\}$ and $n=\{10,15\}$. For \#J we have empirically tuned the threshold to be $\tau = 100$. 

Figure~\ref{fig:NoB_ET} shows the evolution of NoB with time for the different ET rules, taking the average over the simulation runs. We can see that for \#O, $\hbox{NoB} = N$ always, since it is a TT version of ECO-DKF. In contrast, \#C reduces to a half the NoB, which is a significant decrease with a very simple tuning-free ET rule. As expected, \#D evolves to $0$ as the network tends to be disconnected. Regarding the stochastic rules (\#S), depending on the shape of the beta-binomial distribution the network converges to a different steady-state NoB. 
This also affects the NoB transient, in the sense that a sharper distribution at lower $k'$ values imply less time to trigger, reason why in \#S2 there is a very low NoB before $k=30$. 
Finally, the Jensen-Shannon trigger protocol achieves a trade-off between TT ECO-DKF (in the first instants) and a connected ECO-DKF (for $k>60$). 

\begin{figure}[!ht]
\centering
\begin{tabular}{ccc}
    \includegraphics[width=0.38\columnwidth]{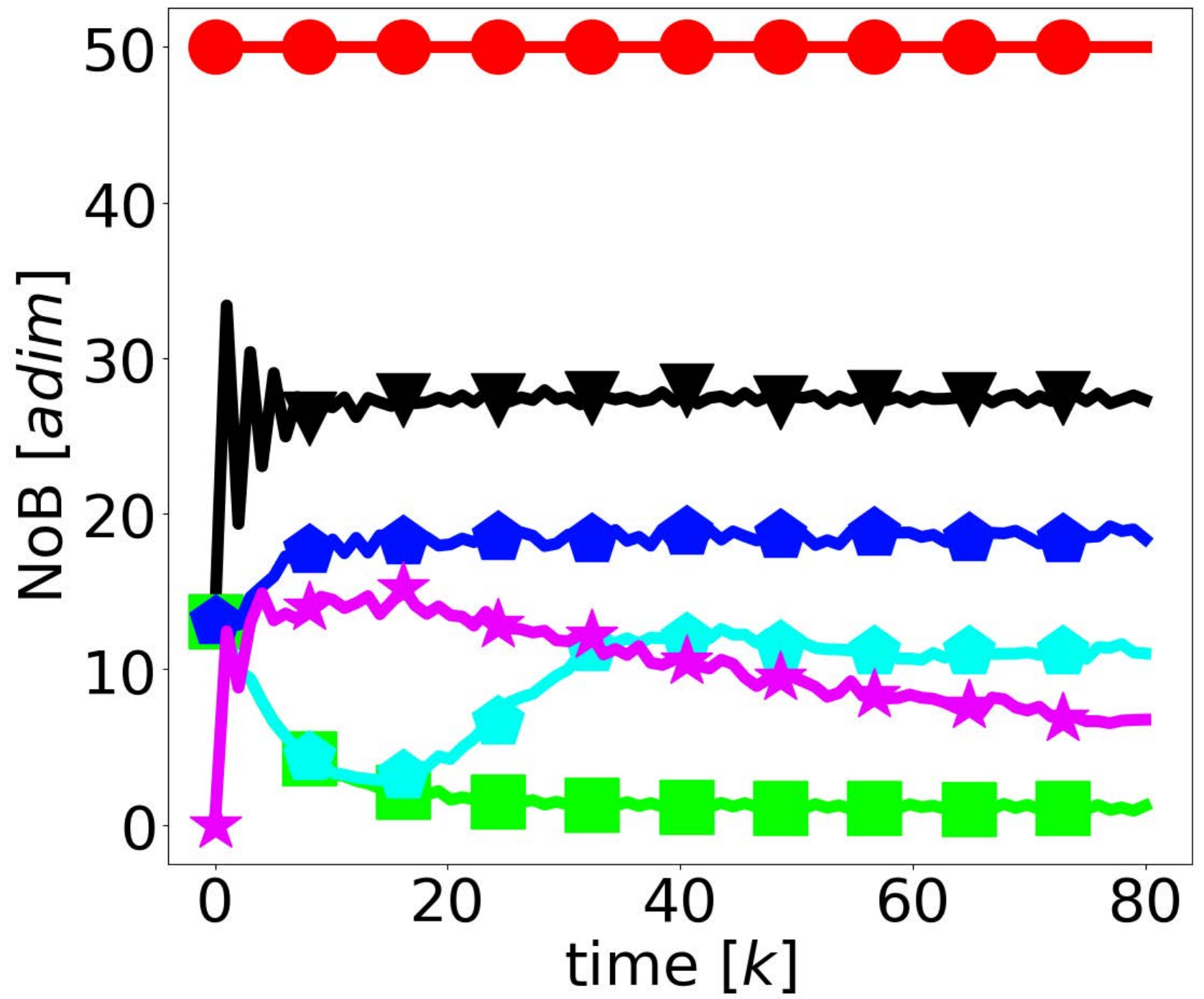}
    &
    \includegraphics[width=0.38\columnwidth]{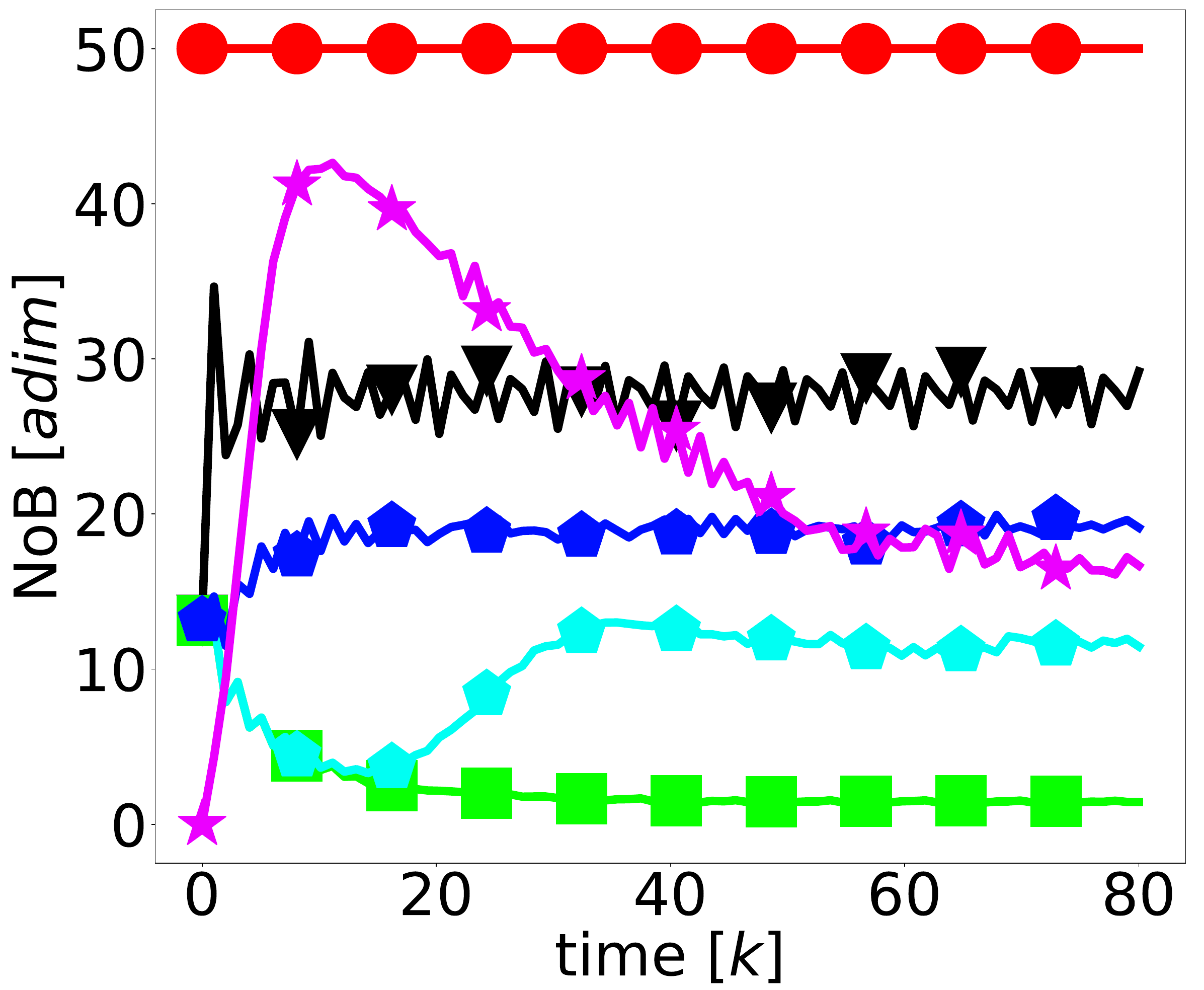}
    &
    \includegraphics[width=0.1\columnwidth]{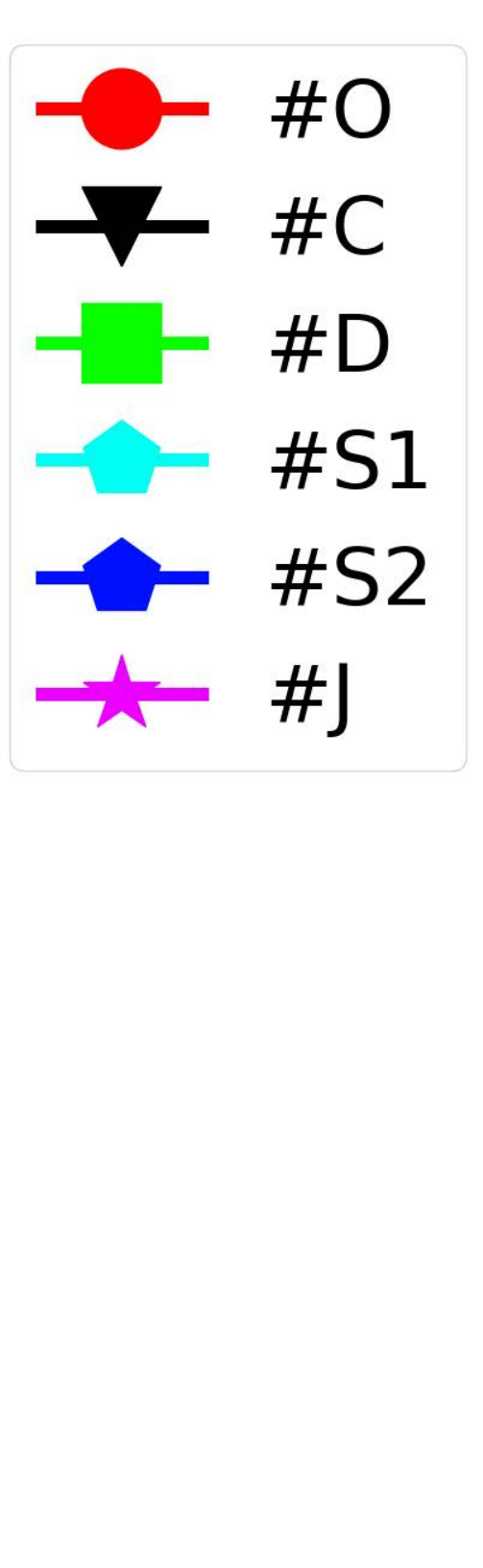}
\end{tabular}
\caption{{\small Evolution of NoB average over simulation runs with time for the different ET rules: (left) Experiment 1, (right) Experiment 2.}}
\label{fig:NoB_ET}
\end{figure}

Efficiency in communications is followed by a non-degraded performance. In Fig.~\ref{fig:RMSEvsREAL_transient_ET} we depict the MSE of the ET rules in Experiments $1$ and $2$. In the first instants, those protocols with more communications have the worst MSE (e.g., \#O) due to the poor initial estimates. However, afterwards, the tendency is inverted, (e.g., \#J). In steady-state, the improvement in communications given by \#S1 and \#S2 is followed by a slight improvement in MSE. \#C has a good trade-off between MSE and NoB. \#D is the worst estimator;  indeed, \#D failed $10$ runs in Experiment $1$ and $13$ runs in Experiment $2$. From Fig.~\ref{fig:RMSEvsREAL_transient}, the MSE of ECO-DKF is the closest to the CKF. In fact, given the high number of positive certifications, ECO-DKF is almost always optimal in the TT version. For general ET cases, ECO-DKF obtains almost the same performance compared to the TT version (Fig.~\ref{fig:RMSEvsREAL_transient_ET}), so the ET rule does not significantly affect the performance.

\begin{figure}[!ht]
\centering
\begin{tabular}{ccc}
\subcaptionbox{Experiment 1}{\includegraphics[width=0.35\columnwidth]{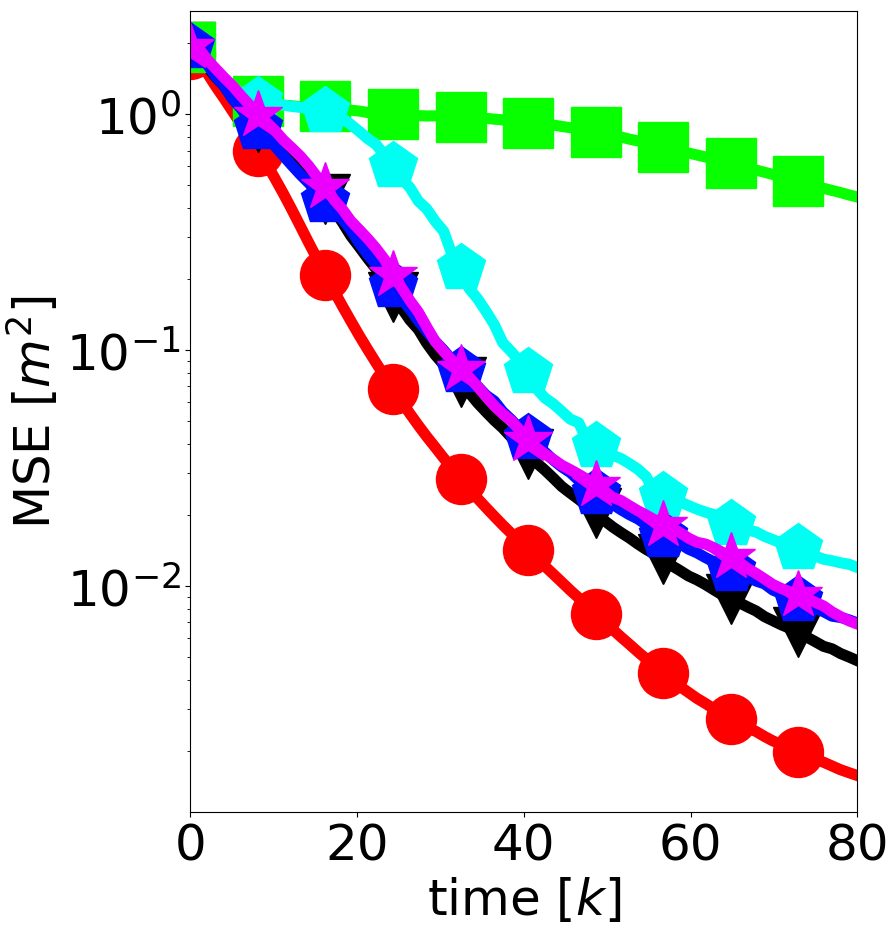}}
&
\subcaptionbox{Experiment 2}{\includegraphics[width=0.35\columnwidth]{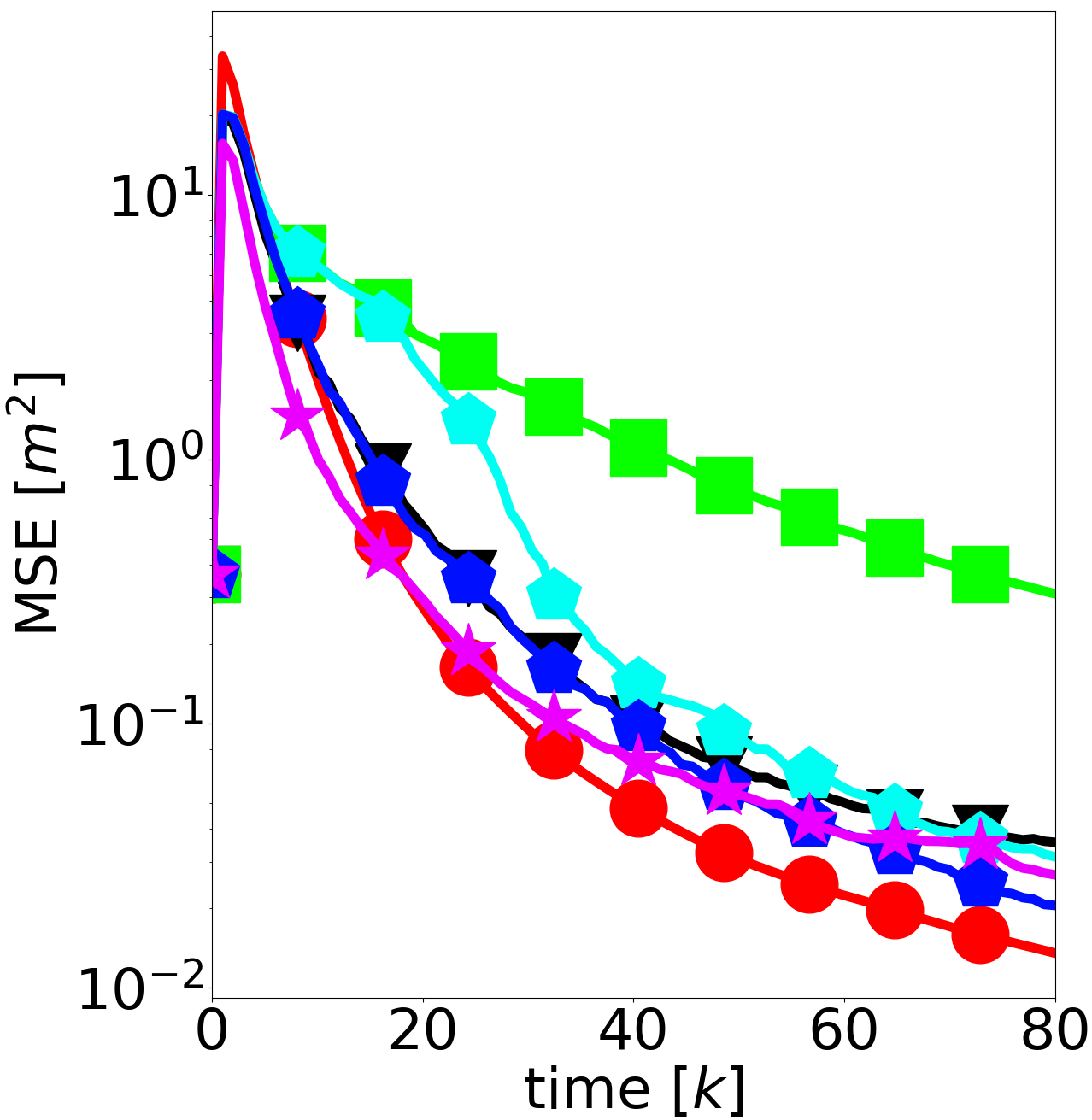}}
&
\includegraphics[width=0.1\columnwidth]{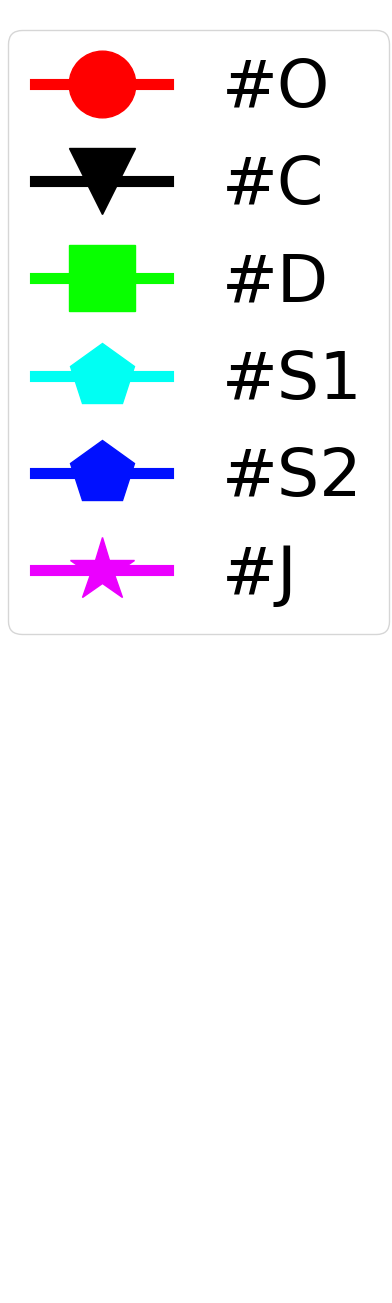}
\end{tabular}
\caption{{\small MSE for the different estimators in the ET case.}}
\label{fig:RMSEvsREAL_transient_ET}
\end{figure}

\vspace{-0.4cm}

\begin{table}[!ht]
\centering
\begin{tabular}{cc}
\:\:\:\:\:\:\:\:{\small Experiment $1$} 
& 
\:\:\:\:\:\:\:\:{\small Experiment $2$}
\\
\:\:\:\:\:\:\:\:\begin{tabular}{|c|c|}
\hline
ET rule & Avg($\mathbb{C}_i^f=\rho_i^f=1$)
\\
\hline
\#O & $99.33\%$ 
\\
\#C & $79.30\%$  
\\
\#D & $11.80\%$  
\\
\#S1 & $44.11\%$ 
\\
\#S2 & $69.88\%$ 
\\
\#J & $47.69\%$ 
\\
\hline
\end{tabular}
&
\:\:\:\:\:\:\:\:\begin{tabular}{|c|c|}
\hline
ET rule & Avg($\mathbb{C}_i^f=\rho_i^f=1$)
\\
\hline
\#O & $95.92\%$ 
\\
\#C & $77.74\%$  
\\
\#D & $12.11\%$  
\\
\#S1 & $44.14\%$ 
\\
\#S2 & $69.16\%$ 
\\
\#J & $67.24\%$ 
\\
\hline
\end{tabular}
\end{tabular}
\caption{{\small Certification results during the ET simulations.}}
\label{table:certifiability_ET}
\end{table}
Finally, in Table \ref{table:certifiability_ET} we evaluate certifiability and optimality. To do so, those cases where $|\mathcal{N}_i(k)|=0$ are removed from $\mathbb{C}_i$ and $\rho_i$ since they are trivial to compute, denoted by $\mathbb{C}_i^f$ and $\rho_i^f$ respectively. The results are averaged over the sensors of the network at each instant. Apart from \#O, the only ET rule which resists with reasonable certification levels is \#C: the disconnected rule hardly ever certifies as optimal; \#S1 and \#S2 have worse certification results; and \#J has a large variability among experiments, so it is hard to predict its certification performance.


\section{Conclusions}\label{sec:conclusion}
This paper has presented ECO-DKF, the first event-triggered and certifiable optimal DKF. It has solved two main problems regarding DKFs: optimality in the estimation and reduction of the communication bandwidth. The outer LJ fuses neighbouring estimates with certifiable guarantees of optimality. The output is integrated in an information DKF, achieving a stable filter on heterogeneous sensor networks, with minimal message size and no tuning. We have proved global asymptotic stability of the estimator and optimality under positive certification. Moreover, a novel ET theory has been derived from the outer LJ method output. Under PJC, ECO-DKF preserves the properties of the algorithm. The ET rule is inexpensive to compute, and avoids individual communication links and multiple information exchanges. ECO-DKF surpasses the state-of-the-art DKFs while decreasing the communication bandwidth usage.


\bibliographystyle{IEEEtran}
\bibliography{biblio}

\balance

\end{document}